\newtheorem{lemma}{Lemma}
\newtheorem{assumption}{Assumption}
\newtheorem{theorem}{Theorem}
\newtheorem{definition}{Definition}
\newtheorem{proposition}{Proposition}
\newtheorem*{proposition*}{Proposition}
\newtheorem*{theorem*}{Theorem}
\newtheorem{remark}{Remark}
\DeclareMathOperator*{\argmin}{argmin}
\title{Ridge Boosting is Both Robust and Efficient}
\author{David Bruns-Smith\\(Stanford University) 
\And
Zhongming Xie\\(UC Berkeley)
\And
Avi Feller\\
(UC Berkeley)
}
\begin{document}

\maketitle

\begin{abstract}
Estimators in statistics and machine learning must typically trade off between efficiency, having low variance for a fixed target, and distributional robustness, such as \textit{multiaccuracy}, or having low bias over a range of possible targets.
In this paper, we consider a simple estimator, \emph{ridge boosting}: starting with any initial predictor, perform a single boosting step with (kernel) ridge regression. 
Surprisingly, we show that ridge boosting simultaneously achieves both efficiency and distributional robustness: for target distribution shifts whose density ratios lie within an RKHS unit ball, this estimator maintains low bias across all such shifts and has variance at the semiparametric efficiency bound for each target.
In addition to bridging otherwise distinct research areas, this result has immediate practical value. Since ridge boosting uses only data from the source distribution, researchers can train a single model to obtain both robust and efficient estimates for multiple target estimands at the same time, eliminating the need to fit separate semiparametric efficient estimators for each target.
We assess this approach through simulations and an application estimating the age profile of retirement income.

\end{abstract}

\section{Introduction}

Estimators in statistics and machine learning must typically trade off between efficiency and robustness. 
Efficient estimators, largely developed in semiparametric statistics and econometrics, focus on having the smallest asymptotic variance (the ``efficient variance'') among unbiased estimates for a single target estimand. 
Importantly, such estimators provide valid asymptotically Normal confidence intervals---critical in many empirical applications---and these intervals have the smallest possible width.
By contrast, robust estimators, the focus of an active literature in Distributionally Robust Optimization (DRO) and other subfields, instead aim to have good performance for many, possibly unspecified targets.
For example, \citet{kim2022universal} show that a class of ``multi-accurate'' estimators---based on boosting an initial predictor---constrains worst-case bias for predicting the unknown mean in a large class of covariate shift problems.
In general, we expect that controlling worst-case bias across many estimands would come at the cost of increased variance.

Surprisingly, we show that a simple version of boosting, \textit{once-boosting with ridge regression}, is simultaneously robust over a large set of possible distribution shifts, while also achieving the efficient variance and smallest possible confidence interval for each estimation target separately. Constructing this ridge boosting predictor is simple: we start with any initial predictor, and then perform one step of boosting using ridge regression in a Reproducing Kernel Hilbert Space (RKHS). For all target populations whose density ratio with respect to the source population is well-approximated by the RKHS, the resulting estimator is approximately unbiased and achieves the semiparametric efficiency bound. This is a very general (but not completely general) class of distribution shifts: it includes any shift whose density ratio can be expressed as linear in a fixed transformation of the covariates, even infinite-dimensional transformations. For example, this includes distribution shifts whose density ratio can be approximated as linear in the last-layer embedding of a pre-trained large language model, but not the more general class that would involve fine-tuning the neural network.

We similarly establish this result for a broad class of linear estimands, generalizing the results from \citet{kim2022universal} beyond the missing mean to more complex targets like average derivatives and impulse responses. In this more general setting, we replace the density ratio with the more flexible \emph{Riesz representer} corresponding to the estimand of interest \citep{chernozhukov2021automatic}. Our key technical insight is that kernel ridge regression implicitly estimates the Riesz representer needed for semiparametric estimation: the ridge boosting estimator we analyze is in fact numerically equivalent to the ``Automatic Debiased Machine Learning'' estimator of \citet{chernozhukov2021automatic} and inherits its optimality properties. As a result, we can train a single predictor using only source distribution data. Deploying this predictor to estimate any target estimand (whose Riesz representer is in the RKHS) will then have both low bias under distribution shift and asymptotically optimal confidence intervals---without ever explicitly computing target-specific bias correction terms.

Our results have immediate practical implications. In settings where practitioners must estimate many related quantities under different covariate shifts---such as estimating health outcomes across multiple hospitals, or computing age profiles of economic variables---our approach eliminates the need to fit separate semiparametric efficient estimators for each target. 
As we show in simulations, this approach also yields valid confidence intervals for scalar estimands, an important requirement for many applications.

\paragraph{Paper organization.} The paper proceeds as follows. Section \ref{sec:problem-setting} formalizes the problem setup, defining our estimation targets and contrasting robustness and efficiency. Section \ref{sec:audit with ridge} introduces ridge boosting and proves it is both multiaccurate and semiparametrically efficient. Section \ref{sec:experiments} demonstrates performance through simulations and an empirical application. Section \ref{sec:dis} concludes with limitations and future directions.

\subsection{Related literature}

\textbf{Multiaccuracy and Multicalibration:} Multi-calibration, introduced by \cite{hebert2018multicalibration}, is a refinement of group calibration that requires a predictor to be simultaneously calibrated across a rich collection of (potentially overlapping) subpopulations. For a prediction task, calibration requires that among the individuals which receive prediction $f(x) = v$, the true expectation is $v$. Variants of the original definition have been studied by a number of works \citep{kim2019multiaccuracy,kim2022universal,deng2023happymap,jung2021moment,gopalan2022low}. Multiaccuracy \citep{kim2019multiaccuracy} is a weaker version of multi-calibration: it weakens multi-calibration by removing conditioning on the predicted values.  Both concepts strengthen classical group fairness by ensuring fine-grained predictive accuracy without sacrificing overall performance. The multiaccuracy criterion is a special case of DRO \citep{hastings2024taking}. The link between multicalibration and boosting is discussed extensively in \cite{globus2023multicalibration}. \cite{long2025kernel} consider boosting over an RKHS to achieve multiaccuracy, but for classification. It would be interesting to see whether or not we could extend our result to their setting.

\textbf{Semiparametric efficiency and doubly robustness:} Semiparametric efficiency theory provides a rigorous foundation for the efficient estimation of target parameters in models that incorporate both parametric and nonparametric components \citep{bickel1993efficient,newey1994asymptotic}. In the context of causal inference, doubly robust estimators \citep{robins1994estimation,kennedy2024semiparametric} form a central class of methods that can attain semiparametric efficiency under correct specification of both. One motivation for these estimators comes from orthogonal (or Neyman-orthogonal) estimating equations \citep{chernozhukov2018double,chernozhukov2021automatic,foster2023orthogonal}, which reduce sensitivity to errors in nuisance function estimation. A complementary line of work \citep{zubizarreta2015stable,ben2021balancing,athey2018approximate,hirshberg2021augmented,bruns2025augmented} focuses on balancing weights, which aim to reweight samples so that covariate distributions are matched across treatment groups. When balancing weights are combined with outcome regression, the resulting augmented estimators inherit both double robustness and semiparametric efficiency. In parallel, targeted maximum likelihood estimation (TMLE) \citep{van2006targeted,van2011targeted} shows that, by incorporating a targeting step based on the efficient influence function of the parameter of interest and grounded in likelihood theory, TMLE achieves semiparametric efficiency. \cite{cho2024kernel} consider the TMLE update in an RKHS, and find a closely related universal adaptability property. In future work, it may be possible to unify their results with our boosting and multicalibration setting. 

\textbf{Connection between multicalibration and causal inference:} There are several recent papers discussing the connection between multicalibration and causal inference. \cite{wu2024bridging} show the connection between invariant risk minimization and multicalibration in the context of concept shift. \cite{ye2024multicalibration} explores multicalibration and universal adaptability in survival analysis. \cite{kern2024multi} shows that the multi-accurate conditional average treatment effect estimate is robust to unknown covariate shifts. \cite{van2023causal} also calibrate a baseline model to achieve semiparametric efficiency, albeit without using multicalibration.

\section{Problem setup: Robustness vs efficiency}
\label{sec:problem-setting}

\subsection{Notation}

Let $X \in \mathcal{X}$ denote covariates and $Y \in \mathcal{Y} \subseteq \mathbb{R}$ an outcome of interest. We consider a source distribution $P$ over $(X,Y)$. We assume that we have $n_p$ independent and identically distributed observations from $P$, denoted by $\{(X_i, Y_i)\}_{i=1}^{n_p} \sim_{\text{i.i.d.}} P$. We let $X_p \in \mathbb{R}^{n_p \times d}$ denote the matrix of observed covariates and $Y_p \in \mathbb{R}^{n_p}$ the corresponding vector of outcomes.
Define $\gamma_0(x) \coloneqq \mathbb{E}_P[Y|X=x]$, the optimal mean-squared error predictor of $Y$ given $X$ in $P$.

\subsection{Defining our estimation target}
We consider the goal of estimating a scalar summary of the optimal predictor $\gamma_0$ \citep[see][]{chernozhukov2018double}. Examples include estimating a missing mean under covariate shift, estimating an average treatment effect, and estimating an average derivative. This setup generalizes the estimands considered in \cite{kim2022universal}. 

\begin{definition}[Target Estimand]\label{def:target-estimand}
    For any function $f: \mathcal{X} \rightarrow \mathcal{Y}$, define:
    \[ \theta_\text{target}(f) \coloneqq \mathbb{E}_P[ m( f , X ) ], \]
    where $m$ is some real-valued function of $f$ and $X$ such that $\theta_\text{target}$ is linear in $f$. Our target estimand is:
    \[ \theta_0 \coloneqq \theta_\text{target}(\gamma_0). \]
\end{definition}

\begin{assumption}[Continuity]\label{asm:continuity}
    We assume that $\theta_\text{target}$ is a continuous linear functional. That is, there exists a constant $C > 0$ such that:
    \[ \theta(f)^2 \leq C \mathbb{E}_P[f(X)^2], \]
    for all $f$ with $\mathbb{E}[f(X)^2] < \infty$.
\end{assumption}

We now make this concrete with some examples.

\textbf{Example 1 (Missing Mean Under Covariate Shift):} We begin with an example that will be familiar to machine learning practitioners. Let $Q$ be another distribution on $(X,Y)$. We assume that we observe samples of $X$ drawn from $Q$, but that $Y$ is unobserved. Our goal is to estimate the missing mean $\mathbb{E}_Q[Y]$. For example, if we collect health outcomes ($Y$) in New York City ($P$), our goal might be to use that data to infer average health outcomes in another city like Raleigh ($Q$).

The issue is that New York and Raleigh are very different cities. But under the covariate shift assumption that $\mathbb{E}_P[Y|X] = \mathbb{E}_Q[Y|X]$, the missing mean can be written as: \[\mathbb{E}_Q[Y] = \mathbb{E}_Q[ \gamma_0(X) ] = \mathbb{E}_{P}\left[ \frac{dQ}{dP}(X) \gamma_0(X)\right] \eqqcolon \theta_\text{target}(\gamma_0),\] 
exactly as in \Cref{def:target-estimand}. In this example, Assumption \ref{asm:continuity} holds if and only if $Q$ is absolutely continuous with respect to $P$ and
\[ \mathbb{E}_P \left[ \frac{dQ}{dP}(X)^2 \right] < \infty. \]

The analogous example in causal inference is estimating the counterfactual potential outcome for treated units when targeting the Average Treatment Effect on the Treated: $P$ are the control units, $Q$ are the treated units, and $Y(0)$ replaces $Y$. See \citet{johansson2022generalization} for discussion. 

\textbf{Example 2 (Average Derivative):} We now consider an example common in applied economics. Let $X_1$ denote the first component of $X$. Then define the average derivative as:
\[ \theta_\text{target}(\gamma_0) = \mathbb{E}_P\left[ \frac{\partial \gamma_0(X)}{\partial X_1} \right]. \]
This $\theta_\text{target}$ is also a linear functional. For example, if $Y$ is household spending, $X_1$ is household income, and the remainder of $X$ contains other household characteristics, then $\theta_\text{target}(\gamma_0)$ measures the average spending response to a change in income, known as the ``Marginal Propensity to Consume.'' 

A central object in what follows will be the \emph{Riesz representer} corresponding to the estimand $\theta_\text{target}$:
\begin{definition}[Riesz representer]
    Every continuous linear functional $\theta$ has a corresponding \emph{Riesz representer}, a unique function $\alpha_\theta(x)$ such that:
\[ \theta(f) = \mathbb{E}_P[\alpha_{\theta}(X) f(X)], \]
for all $f$ such that $\mathbb{E}_P[f(X)^2] < \infty$. We will write $\alpha_\text{target}(x)$ to denote the Riesz representer of $\theta_\text{target}$.
\end{definition}

\textbf{Example (Density Ratio):} When $\theta_\text{target}(f) = \mathbb{E}_Q[ f(X) ]$, then the Riesz representer is the density ratio, $\alpha_\text{target}(x) = dQ/dP(x)$. This has a known analytic form: $\alpha_\text{target}(x) = e(x)/(1-e(x))$ where $e(x)$ is the propensity score or domain classifier for $Q$ vs. $P$.

Note that our setup focuses on scalar summaries of the optimal predictor, and does not, for example, consider finding a predictor that achieves small mean squared error uniformly over a target distribution $Q$. Recent work in \cite{kern2024multi} suggests that we could extend our results to hold uniformly over $\mathcal{X}$. We leave such an extension to future work.

\subsection{Plug-in estimation and regularization bias}

Before turning to robustness and efficiency, we  introduce a natural starting place, the \emph{plug-in estimator}. This first fits $\hat{\gamma}(X)$ by predicting $Y$ from $X$ using samples from population $P$ and then computes:
\begin{align}
    \hat{\theta}_\text{target}(\hat{\gamma}) \coloneqq \frac{1}{n_p} \sum_{i=1}^{n_p} m( \hat{\gamma} , X_i ).\label{eq:plug-in}
\end{align} 
In the special case of the missing mean (Example 1), we fit our predictor under $P$, but apply it to covariates drawn from $Q$. That is, say that we observe $n_q$ iid samples of $X$ from $Q$. The plug-in estimate is then:\footnote{We can technically write this as a special case of $m(\hat{\gamma},X_i)$ over a single population by combining the two populations and introducing an indicator variable for membership in $Q$.} $ \hat{\theta}_\text{target}(\hat{\gamma}) \coloneqq \frac{1}{n_q} \sum_{j=1}^{n_q} \hat{\gamma}(X_j).$

The core difficulty with the plug-in estimator is \textit{regularization bias}. 
When fitting $\hat\gamma$ via machine learning, standard methods regularize the predictor to generalize better out-of-sample. Unfortunately, $\hat{\gamma}$ might regularize away parts of the sample-space that are important for $\theta_\text{target}$. 
Say there is a particular combination of $X$ that is very common in $Q$, but relatively rare in $P$. Then a cross-validated predictor trained under $P$ might regularize away the predictions on those values of $X$ to reduce variance. While optimal for prediction under $P$, this would lead to meaningful bias for $\mathbb{E}_Q[Y]$, which in turn could lead to a very poor estimate of the target estimand. Furthermore, bias means that the estimate will not be asymptotically normal, meaning that standard confidence intervals will not be valid --- often an important desideratum in applied work.

\subsection{Robustness: Constraining worst-case bias across many unknown targets}

A large literature in robust optimization and algorithmic fairness focuses on constructing estimators that modify $\hat\gamma$ above such that the plug-in estimate $\hat\theta(\hat\gamma)$ has small bias for a range of target quantities \citep{kim2022universal}. For example, say we observe data from a single source hospital $P$, but we want to estimate $\mathbb{E}_Q[Y]$ across many different target hospitals $Q_1, Q_2, \ldots Q_{K}$, where collecting unlabeled data and estimating density ratios $\alpha_\theta$ for each target site would be costly or possibly infeasible due to privacy concerns. Can we still estimate a $\hat{\gamma}$ from $P$ that is robust to many unknown distribution shifts? \cite{kim2022universal} show that the answer is yes--- although as we discuss below, we might be concerned about this procedure inflating the variance.

Consider a target estimand $\theta_\text{target}$ satisfying \Cref{def:target-estimand} and Assumption \ref{asm:continuity}, but now assume that we do not have access to $\theta_\text{target}$ ahead of time. Our goal is to use the observations in $P$ to construct a predictor $\hat{\gamma}$ such that the resulting plug-in estimator for $\theta_\text{target}$ is approximately unbiased. In other words, we want to control the worst-case bias over a large set of possible estimands. In the fair machine learning literature, this condition is known as \emph{multiaccuracy}:

\begin{definition}[Multiaccuracy]\label{def:multiacc}
Given an ``auditing'' function class $\mathcal{C}$ and source population $P$, a predictor \(f\) is \((\mathcal{C}, a)\)-multiaccurate if for every function \(c(X) \in \mathcal{C}\),
\[
\sup_{c \in \mathcal{C}} \left| \mathbb{E}_P \left[(Y - f(X))\cdot c(X) \right] \right| \leq a.
\]
\end{definition}

\cite{kim2019multiaccuracy} show that an initial predictor $\hat{\gamma}_{\text{init}}$ can be modified to be multiaccurate by running a simple boosting procedure, including a version of our main proposal of boosting with ridge regression.

We now generalize the result in \cite{kim2022universal}, which considered the specific case of estimating a missing mean under covariate shift, to the more general class defined in Definition \ref{def:target-estimand}, where we replace the density ratio with the Riesz representer. Even if we do not know $\theta_\text{target}$ in advance, if we can construct a set $\Theta$ such that we believe $\theta_\text{target} \in \Theta$, then we can still obtain an unbiased estimator of $\theta_\text{target}$ if we can construct a multiaccurate predictor $\hat{\gamma}_\text{ma}$. 

\begin{proposition}\label{prop:multiacc-bias}
    Let $\Theta$ be some set of functionals $\theta$ such that \Cref{def:target-estimand} and Assumption \ref{asm:continuity} hold. Let $\mathcal{A}$ be the corresponding set of Riesz representers:
    \[ \mathcal{A} \coloneqq \{ \alpha : \exists \theta \in \Theta \text{ s.t. } \theta(f) = \mathbb{E}[f(X) \alpha(X)], \forall f \text{ with } \mathbb{E}[f(X)^2] < \infty \}. \]
    Then if $\hat{\gamma}_\text{ma}$ is $(\mathcal{A}, a)$-multiaccurate:
    \[ | \theta(\hat{\gamma}_\text{ma}) - \theta(\gamma_0) | \leq a, \forall \theta \in \Theta. \]
\end{proposition}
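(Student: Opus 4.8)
The plan is to chain together three facts: the Riesz representation of $\theta$, the tower property of conditional expectation, and the multiaccuracy hypothesis. Fix an arbitrary $\theta \in \Theta$. Since $\theta$ satisfies Definition \ref{def:target-estimand} and Assumption \ref{asm:continuity}, it is a continuous linear functional on $L^2(P)$, so it admits a Riesz representer $\alpha_\theta \in \mathcal{A}$ with $\theta(f) = \mathbb{E}_P[\alpha_\theta(X) f(X)]$ for every $f$ with $\mathbb{E}_P[f(X)^2] < \infty$. First I would apply linearity to write
\[ \theta(\hat\gamma_\text{ma}) - \theta(\gamma_0) = \theta(\hat\gamma_\text{ma} - \gamma_0) = \mathbb{E}_P\big[\alpha_\theta(X)\,(\hat\gamma_\text{ma}(X) - \gamma_0(X))\big], \]
which requires checking that $\hat\gamma_\text{ma} - \gamma_0 \in L^2(P)$; this follows from $\mathbb{E}_P[Y^2] < \infty$ (so $\gamma_0 \in L^2(P)$ by Jensen) and the implicit assumption that the fitted predictor has finite second moment.

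Next I would use $\gamma_0(x) = \mathbb{E}_P[Y \mid X = x]$ together with the tower property to replace $\gamma_0$ by $Y$: treating $\hat\gamma_\text{ma}$ as a fixed measurable function (e.g., trained on an independent fold, or viewed deterministically as in the multiaccuracy definition), we have $\mathbb{E}_P[\alpha_\theta(X)\gamma_0(X)] = \mathbb{E}_P[\alpha_\theta(X)\,\mathbb{E}_P[Y\mid X]] = \mathbb{E}_P[\alpha_\theta(X) Y]$, where Cauchy--Schwarz and $\mathbb{E}_P[\alpha_\theta(X)^2] < \infty$ (Assumption \ref{asm:continuity}) justify the interchange. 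Hence
\[ \theta(\hat\gamma_\text{ma}) - \theta(\gamma_0) = \mathbb{E}_P[\alpha_\theta(X)\hat\gamma_\text{ma}(X)] - \mathbb{E}_P[\alpha_\theta(X) Y] = -\,\mathbb{E}_P\big[(Y - \hat\gamma_\text{ma}(X))\,\alpha_\theta(X)\big]. \]

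Finally, since $\alpha_\theta \in \mathcal{A}$ by construction and $\hat\gamma_\text{ma}$ is $(\mathcal{A}, a)$-multiaccurate, Definition \ref{def:multiacc} gives $\big|\mathbb{E}_P[(Y - \hat\gamma_\text{ma}(X))\alpha_\theta(X)]\big| \le a$. Taking absolute values in the previous display yields $|\theta(\hat\gamma_\text{ma}) - \theta(\gamma_0)| \le a$, and since $\theta \in \Theta$ was arbitrary the claim follows. I do not anticipate a serious obstacle here; the only point requiring care is the measure-theoretic bookkeeping around treating $\hat\gamma_\text{ma}$ as fixed so that the tower property applies cleanly (sample splitting, or stating the bound conditionally on the training data), plus verifying the integrability conditions that make each expectation and interchange well defined.
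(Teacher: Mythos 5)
Your proof is correct and follows essentially the same route as the paper's: represent $\theta(\hat\gamma_\text{ma}) - \theta(\gamma_0)$ via the Riesz representer $\alpha_\theta \in \mathcal{A}$, use the tower property to swap $\gamma_0(X)$ for $Y$, and then invoke the multiaccuracy bound. The additional integrability and sample-splitting caveats you flag are reasonable bookkeeping that the paper's proof leaves implicit, but they do not change the argument.
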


\begin{remark}[Distributionally-Robust Optimization]
    Let $\Theta$ contain $\theta(f) = \mathbb{E}_{Q_k}[f(X)]$ for many distributions $Q_k$, where $\mathcal{A}$ contains the corresponding density ratios, $dQ_k/dP$. In this case, \Cref{def:multiacc} is a special case of the more general literature on Distributionally-Robust Optimization \citep{hastings2024taking}. For a target estimand, $\theta_\text{target}(f) = \mathbb{E}_{Q_\text{target}}[f(X)]$, if $dQ_\text{target}/dP \in \mathcal{A}$, then $\theta_\text{target}(\hat{\gamma}_\text{ma})$ is approximately unbiased for $\theta_\text{target}(\hat{\gamma}_0)$.
\end{remark}

However, we might be concerned about the cost of robustness in terms of additional variance. Since we enforce small bias over a potentially large class of target estimands $\theta \in \Theta$, we would therefore expect larger variance for our specific target estimand $\theta_\text{target}$. 

\subsection{Efficiency: Unbiased estimate with the smallest variance for a single, known target}
\label{sec:semiparametric-efficiency-single} 

In many applications, we know our target functional $\theta_\text{target}$ in advance, such as if we observe samples of $X$ from the distribution $Q$ at training time.
In this case, one popular strategy is to ``bias correct'' the initial estimate $\hat\theta_\text{target}(\hat\gamma)$, a problem studied extensively in the semiparametric statistics literature \citep{chernozhukov2024applied}. Importantly, the resulting estimator has the smallest possible variance among all unbiased estimators \citep{chernozhukov2018double}.

Following the general setup in \citet{chernozhukov2021automatic}, we focus on bias correction using the Riesz representer of $\theta_\text{target}$. Under minimal conditions, if $\hat{\gamma}$ is a consistent estimator of $\mathbb{E}_P[Y|X]$, and $\hat{\alpha}$ is a consistent estimator of $\alpha_\text{target}(X)$, then the estimator,
\[ \hat{\theta}_\text{efficient} \coloneqq \hat{\theta}_\text{target}(\hat{\gamma}) + \underbrace{\frac{1}{n_p} \sum_{i=1}^{n_p} \hat{\alpha}(X_i) (Y_i - \hat{\gamma}(X_i))}_{\text{bias correction term}}, \]
has the following three properties, asymptotically: (1) it is unbiased, i.e., $\mathbb{E}_P[\hat{\theta}_\text{efficient}] = \theta_\text{target}(\gamma_0)$; (2) it is normally distributed; and (3) it has the smallest variance of all regular asymptotically-linear estimators. This third property is called \emph{semiparametric efficiency}, and the corresponding variance is called the \emph{semiparametric efficiency bound},  denoted $V^*_\theta$ for $\theta(\gamma_0)$. Formally, we have:
\[
\sqrt{n}\big( \hat{\theta}_\text{efficient} - \theta_\text{target}(\gamma_0)\big)\rightarrow \mathcal{N}(0,V^*_{\theta_\text{target}}),  \text{ and } \hat{V} \rightarrow_p V^*_{\theta_\text{target}},
\]
where $\hat{V}$ is the sample variance,
\[ \hat{V} \coloneqq \frac{1}{n_p} \sum_{i=1}^{n_p} \Big( m(\hat{\gamma},X_i) + \hat{\alpha}(X_i) (Y_i - \hat{\gamma}(X_i)) - \hat{\theta}_\text{efficient}  \Big)^2. \]
See e.g. \cite{chernozhukov2023simple} for a set of minimal conditions under which this result holds.


\section{Ridge boosting simultaneously achieves robustness and  efficiency}\label{sec:audit with ridge}

Thus far, we have explored two classes of estimators: robust estimators that have low bias over many estimands, and bias-corrected estimators that are unbiased and efficient for a specific target estimand. 
In this section, we demonstrate that it is possible to construct an estimator that is both robust and efficient. 
Specifically, when the set of target Riesz representers $\mathcal{A}$ is the norm ball in an RKHS, we can construct a multiaccurate predictor that has small worst-case bias over the corresponding $\Theta$ while simultaneously achieving the semiparametric efficient variance for \emph{every} target $\theta \in \Theta$. 
We refer to the resulting procedure as \emph{once-boosting with ridge regression} or, more simply, \emph{ridge boosting}.

While our most general theoretical results only hold when $\mathcal{A}$ is a norm-ball in an RKHS, in the Appendix we sketch out a version of our result for boosting with Random Forests. Whether there exists a more general result is a exciting topic for future work.

\subsection{Ridge boosting}
In this section, we introduce the main estimator we analyze, once-boosting with ridge regression.

\paragraph{Setup.}
An RKHS $\mathcal{H}$ is a set of functions $h : \mathcal{X} \rightarrow \mathbb{R}$ defined by an inner product. In the most general case, for all $x \in \mathcal{X}$ there exists $\phi(x) \in \mathcal{H}$ such that for any $h \in \mathcal{H}$, $h(x) = \langle h , \phi(x) \rangle$. One special case is the finite dimensional Hilbert space where $\phi(x)$ is some feature map from $\mathcal{X} \rightarrow \mathbb{R}^d$ and $\mathcal{H} = \{ h(x) = \beta^\top \phi(x) : \beta \in \mathbb{R}^d \}$; our results, however also hold for infinite-dimensional RKHS's. For any $h \in \mathcal{H}$, we define the norm $\Vert h \Vert_\mathcal{H}^2 = \langle h , h \rangle$.

For some RKHS $\mathcal{H}$, we consider the following function class:
\[ \mathcal{A} = \{ h \in \mathcal{H} : \Vert h \Vert_\mathcal{H} \leq B \}, \]
for some $B > 0$. We set $B = 1$ (which will be without loss of generality), so that $\mathcal{A}$ forms a unit ball.

Recall that in our robustness setup, $\mathcal{A}$ corresponds to the set of Riesz representers for all $\theta \in \Theta$. In the covariate shift setting, $\mathcal{A}$ is the set of density ratios. Restricting our attention to Riesz representers that belong to such an $\mathcal{A}$ is very general, but not fully general. Such a set can include highly non-linear functions of $x$, but only functions that can be written in terms of the fixed basis $\phi(x)$. For example, $\mathcal{A}$ could be a set of functions that are linear in the last-layer embedding of a pre-trained large language model (LLM). But $\mathcal{A}$ could not include all functions achievable by fine-tuning that pre-trained LLM.

\paragraph{Estimator.} We now introduce once-boosting with ridge regression.
For notational simplicity, we present the algorithm in the case where $\mathcal{H}$ is a finite-dimensional RKHS with $\phi(x) \in \mathbb{R}^d$ --- the arguments are identical in the infinite-dimensional case. We will write $\Phi_p \in \mathbb{R}^{n_p \times d}$ for the matrix with rows $\phi(x_i)$ for each observation $i$. Assume that we have an initial estimator of $\mathbb{E}_P[Y|X]$, $\hat{\gamma}_\text{init}(X)$, which could have been fit with some arbitrary machine learning algorithm. We then perform a ridge boosting step on the residuals $Y_p - \hat{\gamma}_\text{init}(X_p)$:
\[
\min_{\beta \in \mathbb{R}^d} \left\{ \| Y_p - \hat{\gamma}_\text{init}(X_p) -  \Phi_p \beta \|_2^2  + \lambda \|\beta\|_{2}^2\right\}.
\]
Call the solution $\hat{\beta}_\text{boost}$ and define $\hat{\gamma}_\text{boost}(x) \coloneqq \phi(x)^\top \hat{\beta}_\text{boost}$. Then define:
\begin{align}
    \hat{\gamma}_\text{ma}(x) = \hat{\gamma}_\text{init}(x) + \hat{\gamma}_\text{boost}(x).\label{eq:once-boosted-ridge}
\end{align}

\subsection{Ridge boosting is multiaccurate}
Next we will show that $\hat{\gamma}_\text{ma}$ is indeed multiaccurate. In other words, $\theta(\hat{\gamma}_\text{ma})$ is an approximately unbiased estimate of $\theta(\gamma_0),$ for all $\theta \in \Theta$. We first define the notion of the multiaccuracy error.
\begin{definition}[Multi-accuracy error]
For a given auditing function class $\mathcal{C}$ and a source population $P$, the multiaccuracy error of an estimator $\hat{f}(X)$ and its sample analog are defined as: 
\begin{align*}
    \text{MAE}_{\mathcal{C}}(\hat{f})=\sup_{c\in \mathcal{C}}|\mathbb{E}_P[c(X)\cdot(Y-\hat{f}(X))]|, \qquad  \widehat{\text{MAE}}_{\mathcal{C}}(\hat{f})=\sup_{c\in \mathcal{C}}|c(X_p)^\top(Y_p-\hat{f}(X_p))|.
\end{align*}
\end{definition}

Then we have the following guarantees:
\begin{theorem}\label{thm:ridge-boost-is-multiacc}
For $\hat{\gamma}_\text{ma}$ defined in \eqref{eq:once-boosted-ridge}, we have:
    \[\widehat{\text{MAE}}_{\mathcal{A}}(\hat{\gamma}_\text{ma}) \leq  \max_{1\leq j\leq d} \frac{\lambda}{\lambda+\sigma_j^2}\widehat{\text{MAE}}_{\mathcal{A}}(\hat{\gamma}_\text{init}),\]
    where $\sigma^2_j$ are the eigenvalues of $\Phi_p^\top \Phi_p$. Under standard regularity conditions, with probability $1-\eta$,
    \[ \text{MAE}_{\mathcal{A}}(\hat{\gamma}_\text{ma}) \leq O\left( \delta_n + \sqrt{\frac{1/\eta}{n_p}}\right) ,\]
 for $\delta_n$ such that $\delta_n\rightarrow 0$ as $n \rightarrow \infty$.
\end{theorem}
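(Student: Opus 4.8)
The plan is to prove the two claims in sequence, since the second builds directly on the first. For the finite-sample bound, the key observation is that ridge regression has an explicit closed form. Writing $r = Y_p - \hat{\gamma}_\text{init}(X_p)$ for the initial residuals, the boosting step gives $\hat{\beta}_\text{boost} = (\Phi_p^\top\Phi_p + \lambda I)^{-1}\Phi_p^\top r$, so the new residuals are $Y_p - \hat{\gamma}_\text{ma}(X_p) = r - \Phi_p\hat{\beta}_\text{boost} = (I - \Phi_p(\Phi_p^\top\Phi_p + \lambda I)^{-1}\Phi_p^\top)\, r$. Now for any $c \in \mathcal{A}$ we can write $c(X_p) = \Phi_p \beta_c$ for some $\beta_c$ with $\|\beta_c\|_\mathcal{H} \le 1$ (using the reproducing property and the representer theorem to reduce to the span of the data), and then $c(X_p)^\top(Y_p - \hat{\gamma}_\text{ma}(X_p)) = \beta_c^\top \Phi_p^\top (I - \Phi_p(\Phi_p^\top\Phi_p + \lambda I)^{-1}\Phi_p^\top) r$. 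Using the SVD $\Phi_p = U\Sigma V^\top$ with singular values $\sigma_j$, the matrix $\Phi_p^\top(I - \Phi_p(\Phi_p^\top\Phi_p+\lambda I)^{-1}\Phi_p^\top)$ equals $V \,\mathrm{diag}\!\big(\tfrac{\lambda\sigma_j}{\lambda+\sigma_j^2}\big)\, U^\top$, i.e.\ it is the same as $\Phi_p^\top$ with each singular direction damped by the factor $\tfrac{\lambda}{\lambda+\sigma_j^2}$. Factoring this damping out and applying it back to $\hat{\gamma}_\text{init}$'s residuals then yields $|c(X_p)^\top(Y_p - \hat{\gamma}_\text{ma}(X_p))| \le \max_j \tfrac{\lambda}{\lambda+\sigma_j^2}\,|c'(X_p)^\top(Y_p - \hat{\gamma}_\text{init}(X_p))|$ for an appropriately reweighted $c' \in \mathcal{A}$; taking the supremum over $\mathcal{A}$ gives the stated contraction.

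The second, population-level claim then follows by combining this contraction with two standard ingredients. First, a uniform concentration bound relating $\widehat{\text{MAE}}_\mathcal{A}$ to $\text{MAE}_\mathcal{A}$: since $\mathcal{A}$ is an RKHS unit ball, $\sup_{c\in\mathcal{A}}|(\mathbb{E}_P - \mathbb{E}_{P_n})[c(X)(Y-\hat{f}(X))]|$ is controlled by the Rademacher complexity of the product class, which for a unit ball in an RKHS with bounded kernel scales like $\sqrt{\mathrm{tr}(K)/n_p} \lesssim \sqrt{1/n_p}$; a Markov/Chebyshev-type argument (or a bounded-differences inequality) converts this into the $\sqrt{(1/\eta)/n_p}$ high-probability term. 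Second, we need $\widehat{\text{MAE}}_\mathcal{A}(\hat{\gamma}_\text{init})$ to be $O(1)$ (or at worst slowly growing), which holds under the regularity conditions on $\hat{\gamma}_\text{init}$ — e.g.\ bounded outcomes and a consistent initial fit make the empirical residuals $O_p(1)$ in the relevant norm. The damping factor $\max_j \tfrac{\lambda}{\lambda+\sigma_j^2}$ is then absorbed into $\delta_n$: as long as $\lambda = \lambda_n$ is chosen so that $\lambda_n / (\lambda_n + \sigma_{\min}^2) \to 0$ — which, given $\sigma_j^2 = \Theta(n_p)$ times the eigenvalues of the population kernel operator for the top directions, happens for any $\lambda_n = o(n_p)$ — we get $\delta_n \to 0$.

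The main obstacle is the concentration step. Unlike the finite-sample identity, which is pure linear algebra, bounding $\text{MAE}_\mathcal{A}(\hat{\gamma}_\text{ma}) - \widehat{\text{MAE}}_\mathcal{A}(\hat{\gamma}_\text{ma})$ requires care because $\hat{\gamma}_\text{ma}$ is itself data-dependent, so the empirical process is not over a fixed function class. The clean way around this is to note that $Y - \hat{\gamma}_\text{ma}(X) = (I - \text{ridge smoother})(Y - \hat{\gamma}_\text{init}(X))$ and that the effective class of "smoothed" auditing functions still lives in the RKHS ball (possibly inflated by a constant), so a single uniform bound over $\mathcal{A} \times \mathcal{A}$, or over $\mathcal{A}$ applied to a bounded residual class, suffices — but making the dependence on $\hat{\gamma}_\text{init}$ explicit and keeping the constants from blowing up with the RKHS complexity is the delicate part. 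A secondary technical point is handling the infinite-dimensional case uniformly: the SVD argument must be replaced by the spectral decomposition of the (compact, self-adjoint) empirical covariance operator on $\mathcal{H}$, with $\sigma_j^2$ its eigenvalues, but since there are at most $n_p$ nonzero eigenvalues the argument goes through verbatim once phrased operator-theoretically.
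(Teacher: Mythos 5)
Your proof of the first (in-sample) inequality matches the paper's: both arguments reduce the supremum over the RKHS unit ball to the Hilbert norm of $\Phi_p^\top(\text{residual})$ via self-duality, plug in the ridge closed form, and read off the contraction factor $\lambda/(\lambda+\sigma_j^2)$ from the SVD of $\Phi_p$. That part is fine.

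For the population bound, however, you take a genuinely different route from the paper, and yours has a gap. The paper never passes back through the sample quantity $\widehat{\text{MAE}}$: it applies the same dual-norm identity directly to the population error, writing $\text{MAE}_{\mathcal{A}}(\hat{\gamma}_\text{ma}) = \Vert \mathbb{E}_P[\phi(X)(Y-\hat{\gamma}_\text{ma}(X))]\Vert_{\mathcal{H}}$, bounds this by Cauchy--Schwarz as $\sqrt{B_1}$ times the root excess risk of the kernel ridge boosting step, and then invokes off-the-shelf KRR generalization rates (a critical-radius bound in the finite-dimensional case, effective-dimension rates with $\lambda=(\log n/n)^{b/(b+1)}$ in the infinite-dimensional case) to obtain $\delta_n$. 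Your route --- population MAE $\le$ sample MAE plus a uniform deviation term, then sample MAE $\le$ (damping factor) $\times\,\widehat{\text{MAE}}_{\mathcal{A}}(\hat{\gamma}_\text{init})$, then argue the damping factor vanishes --- requires $\max_{j}\lambda/(\lambda+\sigma_j^2)\to 0$, i.e.\ that \emph{all} relevant eigenvalues of $\Phi_p^\top\Phi_p$ dominate $\lambda$. That maximum is governed by the \emph{smallest} eigenvalue, not by the "top directions" as you assert; it fails whenever $d>n_p$ or the empirical Gram is ill-conditioned, and it fails in exactly the infinite-dimensional case you claim goes through "verbatim": the empirical covariance operator has rank at most $n_p$, its nonzero spectrum decays with the kernel's eigenvalues, and the contraction factor tends to $1$ rather than to $0$. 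Separately, the concentration step over the data-dependent residual class is precisely the difficulty you flag but do not resolve: a uniform bound over $\mathcal{A}$ paired with a class rich enough to contain an arbitrary ML fit $\hat{\gamma}_\text{init}$ needs sample splitting or an unstated complexity assumption. The paper's Cauchy--Schwarz route sidesteps both issues at once --- no empirical process over a data-dependent class, and no dependence on the bottom of the empirical spectrum --- which is why it, rather than the contraction-plus-concentration argument, is the right tool for the population claim.
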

We provide a proof and additional discussion in the Appendix. The first result shows that one step of ridge boosting is guaranteed to decrease the sample multiaccuracy error. The second result shows we can generalize out of sample. The rate of convergence of $\delta_n$ depends on the dimensionality and smoothness of $\mathcal{H}$. When $\mathcal{H}$ is finite-dimensional with dimension $d$, $\delta_n \leq \sqrt{d/n}$.

\begin{remark}
    We emphasize that Theorem 1 is not a fundamentally new result. The multiaccuracy literature already proves generalization bounds on the multiaccuracy error for boosting estimators using ridge regression; see \cite{kim2019multiaccuracy}. However, our boosting procedure differs slightly in its specifics (e.g. linear boosting instead of exponential weighting) and so we provide Theorem 1 for completeness. Our proof uses standard techniques from the analysis of kernel ridge regression.
\end{remark}

\subsection{Ridge boosting is semiparametrically efficient}\label{sec:efficiency}
We showed above that boosting with ridge regression produces a predictor that is multiaccurate with respect to $\mathcal{A}$. We now show that this multiaccurate estimator is also semiparametrically efficient for all $\theta \in \Theta$. Specifically, we will show that ridge boosting implicitly estimates the Riesz representer and performs semiparametric bias correction. In fact, the resulting estimator $\theta(\hat{\gamma}_\text{ma})$ is numerically equivalent to a special case of Automatic Debiased Machine Learning using kernel Riesz regression \cite{chernozhukov2021automatic, singh2024debiased}.

\subsubsection{Ridge regression implicitly estimates Riesz representers}\label{sec:ridge-estimates-riesz}

The key fact that will lead to our main result is that ridge regression implicitly estimates Riesz representers. To see this, notice that for any continuous linear functional $\theta$, the Riesz representer $\alpha_\theta(X)$ is the unique solution to the following loss minimization problem
\begin{align} \alpha_\theta = \argmin_{\alpha : \mathbb{E}[\alpha(X)^2] < \infty} \{ \mathbb{E}_P[\alpha(X)^2] - 2 \theta(\alpha) \}.\label{eq:pop-riesz-loss} \end{align}
See \cite{chernozhukov2021automatic} for more discussion. One way to  approximate $\alpha_\theta(X)$ is to minimize \eqref{eq:pop-riesz-loss} over an RKHS $\mathcal{H}$. The sample version of this optimization problem is:
\begin{align}
    \min_{\eta \in \mathbb{R}^d} \left\{ \frac{1}{n_p} \eta^\top \Phi_p^\top \Phi_p \eta - 2 \theta(\Phi_p)^\top \eta + \lambda \Vert \eta \Vert_2^2 \right\},\label{eq:ridge-riesz-rep}
\end{align}
with minimizer $\hat{\eta}_\lambda$ and corresponding Riesz representer estimate $\hat{\alpha}^\lambda_\theta(x) \coloneqq \phi(x)^\top \hat{\eta}_\lambda$. See \cite{singh2024debiased} for an analysis of this estimator.

Ridge regression when used to estimate $\theta$ can always be rewritten as a weighting estimator with weights $\hat{\alpha}^\lambda_\theta(X_p)$, as we show in the following proposition.
\begin{proposition}\label{prop:ridge-ests-riesz}
    Let $\theta$ be any continuous linear functional, and define the Riesz representer estimate $\hat{\alpha}^\lambda_\theta$ from \eqref{eq:ridge-riesz-rep}. Let $Z_p$ be any function of $X_p$ and $Y_p$. Consider the ridge regression in $\mathcal{H}$ that predicts $Z_p$ given $X_p$:
    \[ \min_{\beta \in \mathbb{R}^d} \Vert Z_p - \Phi_p \beta \Vert_2^2 + \lambda \Vert \beta \Vert_2^2. \]
    Call the solution $\hat{\beta}_\text{ridge}$ and the corresponding predictor $\hat{\gamma}_\text{ridge}(x) = \phi(x)^\top \hat{\beta}_\text{ridge}$. Then:
    \begin{align*}
        \hat{\theta}(\hat{\gamma}_\text{ridge}) = \frac{1}{n_p} \hat{\alpha}^\lambda_\theta(X_p)^\top Z_p.
    \end{align*}
\end{proposition}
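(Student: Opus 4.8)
The plan is to reduce both sides of the claimed identity to explicit linear maps of $Z_p$, using the first-order optimality conditions of the two ridge problems, and then verify that the two maps coincide. I work in the finite-dimensional representation with $\phi(x)\in\mathbb{R}^d$; the infinite-dimensional RKHS case is identical after invoking the representer theorem to rewrite each optimization over coefficients of $\{\phi(x_i)\}_{i=1}^{n_p}$ (equivalently, replacing $\Phi_p^\top\Phi_p$ and $\Phi_p^\top Z_p$ by their kernel-matrix counterparts). I would also fix at the outset a common normalization for the two ridge penalties, so that the relevant matrix inverses match exactly; this is pure bookkeeping but worth stating explicitly.

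First I would solve the ridge regression predicting $Z_p$: setting the gradient of $\|Z_p-\Phi_p\beta\|_2^2+\lambda\|\beta\|_2^2$ to zero gives $\hat{\beta}_\text{ridge}=(\Phi_p^\top\Phi_p+\lambda I)^{-1}\Phi_p^\top Z_p$, and hence $\hat{\gamma}_\text{ridge}(x)=\phi(x)^\top\hat{\beta}_\text{ridge}$. Next, because $m(\cdot,x)$ is linear in its function argument (cf.\ \Cref{def:target-estimand}), the plug-in functional $\hat{\theta}$ is linear in the function it acts on; decomposing $\hat{\gamma}_\text{ridge}=\sum_{j=1}^d(\hat{\beta}_\text{ridge})_j\,\phi_j$ into the coordinate functions $\phi_j$ of the feature map, I get $\hat{\theta}(\hat{\gamma}_\text{ridge})=\theta(\Phi_p)^\top\hat{\beta}_\text{ridge}$, where $\theta(\Phi_p)\in\mathbb{R}^d$ is exactly the vector of $\hat\theta$ evaluated on the coordinate functions of $\phi$ that appears as the linear-term coefficient in \eqref{eq:ridge-riesz-rep}. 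Therefore $\hat{\theta}(\hat{\gamma}_\text{ridge})=\theta(\Phi_p)^\top(\Phi_p^\top\Phi_p+\lambda I)^{-1}\Phi_p^\top Z_p$. On the other side, the first-order condition for \eqref{eq:ridge-riesz-rep} gives $\hat{\eta}_\lambda=(\tfrac{1}{n_p}\Phi_p^\top\Phi_p+\lambda I)^{-1}\theta(\Phi_p)$, so $\hat{\alpha}^\lambda_\theta(X_p)=\Phi_p\hat{\eta}_\lambda$ and, using symmetry of $\Phi_p^\top\Phi_p+\lambda I$,
\[
\tfrac{1}{n_p}\,\hat{\alpha}^\lambda_\theta(X_p)^\top Z_p \;=\; \tfrac{1}{n_p}\,\theta(\Phi_p)^\top\bigl(\tfrac{1}{n_p}\Phi_p^\top\Phi_p+\lambda I\bigr)^{-1}\Phi_p^\top Z_p \;=\; \theta(\Phi_p)^\top\bigl(\Phi_p^\top\Phi_p+n_p\lambda I\bigr)^{-1}\Phi_p^\top Z_p .
\]
Under the common scaling fixed above, the final matrix inverse is the same one appearing in $\hat{\theta}(\hat{\gamma}_\text{ridge})$, and the two expressions agree, which is the claim.

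The computation itself is routine linear algebra, so the only real content — and the step I would be most careful about — is the identification in the middle: that the linear-term coefficient in the Riesz-regression objective \eqref{eq:ridge-riesz-rep} is precisely the plug-in functional $\hat{\theta}$ applied coordinatewise to the feature map. That is what forces the two sides to line up rather than merely have the same algebraic shape, and it is exactly the sense in which ``ridge regression implicitly estimates the Riesz representer.'' The remaining care is the penalty-scaling convention so the inverses coincide literally, and a one-line remark (rather than a repeat of the argument) that the representer theorem carries everything over to infinite-dimensional $\mathcal{H}$.
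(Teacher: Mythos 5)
Your proof is correct and follows essentially the same route as the paper's: write down the closed-form solutions of both ridge problems, identify the linear-term coefficient $\theta(\Phi_p)$ as the plug-in functional applied to the coordinate functions of $\phi$, and match the resulting expressions. Your explicit flag about fixing a common $1/n_p$ normalization of the two squared-loss terms (so the matrix inverses coincide literally) is a point the paper's own proof glosses over, and is worth keeping.
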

This is a well-known result; see \cite{kallus2020generalized}, and see \cite{bruns2025augmented} for an extensive discussion of the implications for semiparametric estimation.

\begin{remark}
    Ridge regression can be written as a linear smoother. The result above says that when we compute $\hat{\theta}(\hat{\gamma}_\text{ridge})$, the smoother weights estimate the Riesz representer. Random forests can also be written as a linear smoother, and \cite{lin2022regression} show that the weights converge to the Riesz representer in a similar sense. We use this connection to show that the same robustness/efficiency properties apply to boosting with Random Forests in \Cref{apx:random-forest-boosting}.
\end{remark}

\subsubsection{Main result: Semiparametric efficiency for all $\theta \in \Theta$}

We now apply \Cref{sec:ridge-estimates-riesz} to our estimator $\hat{\gamma}_\text{ma}$ to establish our main result: $\theta(\hat{\gamma}_\text{ma})$ does not just have small worst-case bias over all $\theta \in \Theta$, it is semiparametrically efficient for each individual $\theta \in \Theta$. 

For any $\theta \in \Theta$, we have the following:
\begin{align}
    \hat{\theta}(\hat{\gamma}_\text{ma}) &= \hat{\theta}( \hat{\gamma}_\text{init} ) + \hat{\theta}(\hat{\gamma}_\text{boost})\nonumber\\
    &= \hat{\theta}( \hat{\gamma}_\text{init} ) + \frac{1}{n_p}\hat{\alpha}^\lambda_\theta(X_p)^\top (Y_p - \hat{\gamma}_\text{init}(X_p)),\label{eq:double-robust-form}
\end{align}
where the second equality follows from applying \Cref{prop:ridge-ests-riesz} for $Z_p = Y_p - \hat{\gamma}_\text{init}(X_p)$.

Note that \eqref{eq:double-robust-form} has exactly the form of $\hat{\theta}_\text{efficient}$ from \Cref{sec:semiparametric-efficiency-single}. In fact, this estimation strategy --- in which we fit an arbitrary machine learning estimator $\hat{\gamma}_\text{init}$ and use a $\hat{\alpha}_\theta(X)$ that minimizes \eqref{eq:pop-riesz-loss} for bias correction --- is a well-studied estimator from the semiparametric statistics literature \citep{athey2018approximate, hirshberg2021augmented, chernozhukov2021automatic, bruns2025augmented}. The particular form of $\hat{\alpha}^\lambda_\theta(X)$ used here, which is obtained by minimizing \eqref{eq:ridge-riesz-rep} in an RKHS, is specifically considered in \cite{hirshberg2019minimax, kallus2020generalized, hazlett2020kernel, singh2024debiased}. Thus, once-boosting with ridge provides a multiaccurate predictor, but the resulting point estimate $\theta(\hat{\gamma}_\text{ma})$ is numerically-equivalent to well-studied semiparametrically efficient estimators. We leverage this connection to establish our main theoretical result.

\begin{theorem}[Informal]\label{thm:ridge-boosting-is-efficient}
Given standard regularity assumptions and some conditions on the quality of $\hat{\gamma}_\text{init}$, then for all $\theta \in \Theta$, the ridge boosting plug-in estimator is asymptotically Normal and its variance achieves the asymptotic variance lower bound $V^*_\theta$:
\[
\sqrt{n}\big( \hat{\theta}(\hat{\gamma}_\text{ma}) - \theta(\gamma_0)\big)\rightarrow \mathcal{N}(0,V^*_\theta),  \text{ and } \hat{V} \rightarrow_p V^*_\theta.
\]
where $\hat{V}$ is the sample variance,
\[ \hat{V} \coloneqq \frac{1}{n_p} \sum_{i=1}^{n_p} \big( m(\hat{\gamma}_\text{ma},X_i) - \theta(\hat{\gamma}_\text{ma}) \big)^2. \]
\end{theorem}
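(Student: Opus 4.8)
The plan is to exploit the numerical identity \eqref{eq:double-robust-form}: for each fixed $\theta \in \Theta$, \Cref{prop:ridge-ests-riesz} shows that $\hat{\theta}(\hat{\gamma}_\text{ma})$ is \emph{exactly} the automatic-debiased (``AutoDML'') estimator combining the arbitrary first stage $\hat{\gamma}_\text{init}$ with the kernel-ridge Riesz estimate $\hat{\alpha}^\lambda_\theta$ from \eqref{eq:ridge-riesz-rep}. So I would not reprove asymptotic efficiency from scratch; instead I would verify the hypotheses of an existing result for this estimator class --- e.g.\ \citet{chernozhukov2023simple} or the kernel-Riesz analysis of \citet{singh2024debiased}. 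These reduce to four ingredients: (i) $L^2(P)$-consistency of $\hat{\gamma}_\text{init}$ for $\gamma_0$ at a rate $s_n$; (ii) $L^2(P)$-consistency of $\hat{\alpha}^\lambda_\theta$ for $\alpha_\theta$ at a rate $r_n$; (iii) a negligible product rate $\sqrt{n}\,s_n r_n \to 0$; and (iv) control of the empirical-process (stochastic equicontinuity) term, which holds if $\hat{\gamma}_\text{init}$ lies in a Donsker class or --- cleaner --- if the first stage and the boosting step are fit on independent folds. Ingredient (iii) is the real content of the ``conditions on the quality of $\hat{\gamma}_\text{init}$'' in the statement.

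The steps, in order. First, $\theta \in \Theta$ is by construction equivalent to $\alpha_\theta \in \mathcal{A}$, i.e.\ $\Vert \alpha_\theta \Vert_{\mathcal{H}} \le B = 1$, so the target Riesz representer lies in the RKHS over which \eqref{eq:ridge-riesz-rep} optimizes and is the unique minimizer of the population loss \eqref{eq:pop-riesz-loss}. Second, establish (ii): since \eqref{eq:ridge-riesz-rep} is a ridge problem in which a term linear in $\theta$ replaces the usual least-squares cross-term, the standard kernel-ridge bias--variance decomposition (eigen-decay of the kernel operator plus the source condition $\alpha_\theta \in \mathcal{H}$) gives $\Vert \hat{\alpha}^\lambda_\theta - \alpha_\theta \Vert_{L^2(P)} = O_P(r_n)$ for a suitable deterministic $\lambda = \lambda_n$, with $r_n$ depending only on the kernel and the norm bound $B=1$ --- crucially not on which $\theta \in \Theta$ was chosen. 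Third, expand
\[
\sqrt{n}\big( \hat{\theta}(\hat{\gamma}_\text{ma}) - \theta(\gamma_0) \big) = \frac{1}{\sqrt{n}}\sum_{i=1}^{n_p} \psi_\theta(X_i,Y_i) + \sqrt{n}\,R_n + \sqrt{n}\,E_n ,
\]
where $\psi_\theta(x,y) = m(\gamma_0,x) - \theta(\gamma_0) + \alpha_\theta(x)\big(y - \gamma_0(x)\big)$ is the efficient influence function, $R_n$ is the second-order remainder --- a bilinear functional of $\hat{\gamma}_\text{init} - \gamma_0$ and $\hat{\alpha}^\lambda_\theta - \alpha_\theta$ bounded by $s_n r_n$ via the Riesz identity and Cauchy--Schwarz --- and $E_n$ is the empirical-process term. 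Ingredients (iii)--(iv) send $\sqrt{n}R_n$ and $\sqrt{n}E_n$ to $o_P(1)$, and the Lindeberg--L\'evy CLT applied to the leading average gives $\mathcal{N}(0,\mathrm{Var}_P \psi_\theta)$; since $V^*_\theta$ is by definition the variance of the efficient influence function, $\mathrm{Var}_P \psi_\theta = V^*_\theta$.

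For the variance I would work with the sample variance of the \emph{estimated} efficient influence function $m(\hat{\gamma}_\text{ma},X_i) + \hat{\alpha}^\lambda_\theta(X_i)\big(Y_i - \hat{\gamma}_\text{ma}(X_i)\big) - \hat{\theta}(\hat{\gamma}_\text{ma})$ --- i.e.\ the estimator of \Cref{sec:semiparametric-efficiency-single} specialized to this setting via \eqref{eq:double-robust-form}. Expanding the square, the leading piece converges to $\mathrm{Var}_P(m(\gamma_0,X))$ and the residual-weighted piece to $\mathbb{E}_P[\alpha_\theta(X)^2(Y-\gamma_0(X))^2]$ by the law of large numbers (using consistency of $\hat{\gamma}_\text{ma}$ and $\hat{\alpha}^\lambda_\theta$ and a fourth-moment condition on $Y$), while the cross piece vanishes because $\mathbb{E}_P[Y-\gamma_0(X)\mid X]=0$; their sum is $V^*_\theta$. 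The boosting first-order conditions $\Phi_p^\top\big(Y_p - \hat{\gamma}_\text{ma}(X_p)\big) = \lambda\hat{\beta}_\text{boost}$ are convenient for showing the overall bias-correction term is $o_P(1)$.

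I expect the main obstacle to be step two together with (iii): making the Riesz rate $r_n$ explicit and then translating it into the precise rate $s_n = o(n^{-1/2}/r_n)$ that $\hat{\gamma}_\text{init}$ must satisfy. A close second is (iv): as the algorithm is stated, $\hat{\gamma}_\text{init}$, $\hat{\beta}_\text{boost}$, and the final average all use the same $n_p$ observations, so a clean statement needs either a Donsker restriction on the first-stage class or cross-fitting of the two stages, and the theorem should be phrased accordingly. One point that takes care but is ultimately favorable: because $r_n$ is uniform over $\mathcal{A}$, the remainder bounds --- and hence the whole conclusion --- hold with a single choice of $s_n r_n$ simultaneously for every $\theta \in \Theta$, so one predictor $\hat{\gamma}_\text{ma}$ attains the efficiency bound for all targets at once; this uniformity is exactly what separates the claim from a one-target-at-a-time argument.
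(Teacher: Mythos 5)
Your proposal is correct and follows essentially the same route as the paper: both reduce $\hat{\theta}(\hat{\gamma}_\text{ma})$ to the automatic-debiased form via \Cref{prop:ridge-ests-riesz} and then verify the hypotheses of the general debiased-ML normality result of \citet{chernozhukov2023simple}, using the kernel-Riesz rate of \citet{singh2024debiased} for $\hat{\alpha}^\lambda_\theta$ with $\lambda = n^{-1/2}$ to discharge the product-rate condition. Your explicit attention to the empirical-process/cross-fitting requirement and your preference for the sample variance of the estimated influence function (rather than the plug-in moments alone) are points the paper's appendix handles only implicitly by importing them with the cited corollary.
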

See the Appendix for a formal Theorem statement and proof. This result establishes that for any estimand $\theta$ whose Riesz representer belongs to the RKHS ball $\mathcal{A}$, the ridge boosting estimator $\hat{\theta}(\hat{\gamma}_\text{ma})$ is not just robust, it is semiparametrically efficient.

This estimator is also computationally convenient for practitioners interested in efficient inference. We simply take the initial predictor and run one step of boosting with kernel ridge regression, which has many readily-available implementations. And since a plug-in estimator using this new predictor is semiparametrically efficient for any estimand in $\Theta$, we do not have to (explicitly) fit individual Riesz representer estimates $\hat{\alpha}_\theta$ for each $\theta$, which can be expensive when there are many $\theta$s of interest. We also do not require specialized code for minimizing the Riesz loss \eqref{eq:pop-riesz-loss}, which can be a barrier for practitioners with less familiarity with Riesz representers.

\section{Experiments}
\label{sec:experiments}

\subsection{Simulation study}\label{sec:simulation}

We now demonstrate in simulation that we simultaneously achieve robust and efficient inference by deploying our multiaccurate predictor on different estimands in different environments. To demonstrate the generality of the framework, we consider estimating an average derivative with correlated covariates --- on its own, already a difficult task --- under distribution shift. We will fit both ridge and once-boosted ridge models in a training sample, and then compute the usual 95\% asymptotic Normal confidence interval for the average derivative across three test distributions, assessing empirical coverage over simulation draws. 

\textbf{Simulation Setup:}
We consider three-dimensional correlated covariates: $X_1, X_2 \sim N(\mu,1)$, and
\[X_3 = 4 \cdot \sigma(X_1 - X_2) + \epsilon - 2,\] 
where $\sigma(\cdot)$ is the sigmoid, and $\epsilon \sim N(0,2^2)$. For the training distribution, $P$, $\mu = 0$; for the test distributions, we vary $\mu$. The outcomes are generated as:
\begin{align*}
    Y = f(X) + \eta, \quad f(X) = X_1 \cdot( 0.2 + \sin(X_1) + \sigma(X_2) - 0.2 \cdot X_3), \quad \eta \sim N(0, 2^2).
\end{align*}
The estimand is the average derivative of $f(X)$ with respect to $X_1$ under $Q$, $\mathbb{E}_Q[ \partial f(X) / \partial X_1 ]$. We consider three different distributions for $Q$, with the same setup but with  $\mu \in \{-1,0,1\}$. The dependence of $X_3$ on both $X_1$ and $X_2$ makes the average derivative more challenging to estimate.

\textbf{Methods:}
We compare two estimators. 
    (1) \emph{Naive Kernel Ridge:} a standard kernel ridge outcome regression trained on the source data and plugged in for each target estimand.
    (2) \emph{Boosted Kernel Ridge:} a one-step kernel ridge boosting procedure applied to the residuals of the initial kernel ridge regression.
Both models are trained solely on the training (source) data and evaluated on each of the three test (target) distributions.

\textbf{Monte Carlo Simulation:}
For each simulation, we draw a training sample of $X$ and $Y$ from the source distribution, and fit the kernel ridge and boosted kernel ridge model. We then draw one sample of $X$ from each of the three test distributions, and estimate the average derivative with respect to $X_1$ on that test distribution by symmetric differencing, along with the usual 95\% asymptotic normal confidence interval. The whole process is repeated for sample sizes ranging from 50 to 500 and with 1,000 Monte Carlo replications. We report the empirical coverage of the confidence intervals for both methods across sample sizes and replications. The full simulation study is run on a four-core laptop. 

\begin{figure}
    \centering
    \includegraphics[width=1.0\linewidth]{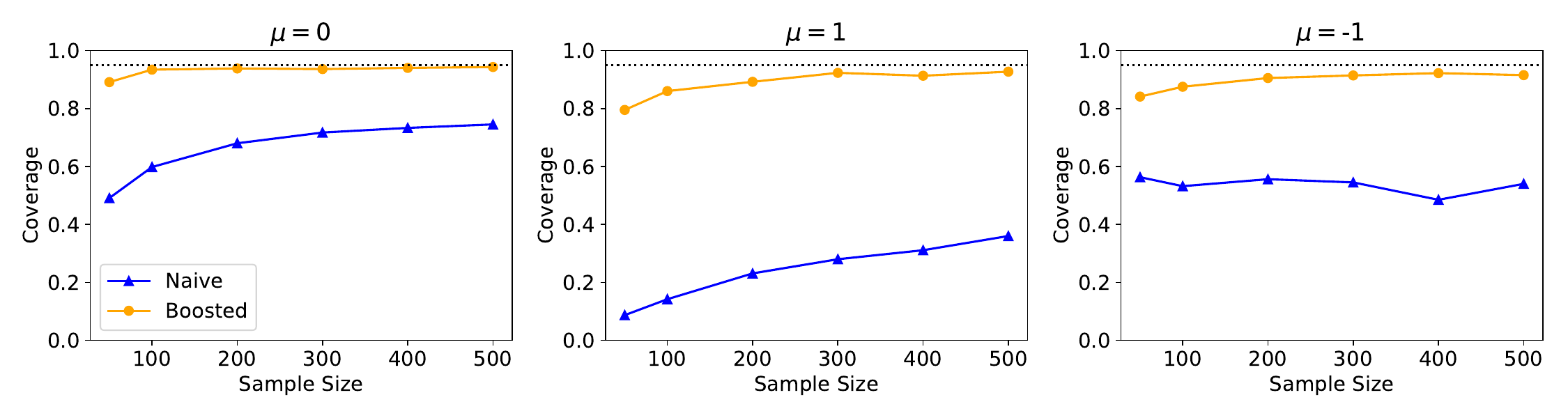}
    \caption{The empirical coverage comparisons between single kernel ridge outcome regression and kernel-ridge-boosted estimator across sample-sizes and the three environments $\mu \in \{0, 1, -1\}$. The blue lines with triangle markers plot the coverage for the base ridge model, and the orange line with circular markers for the once-boosted ridge model. The dotted line is at 0.95. }
    \label{fig:coverage}
\end{figure}

\textbf{Results:} The results are shown in Figure \ref{fig:coverage}. Across all three test distributions, the naive confidence intervals using the base ridge model under cover. By contrast, the confidence intervals from once-boosted ridge regression achieve good empirical coverage even with a moderate sample size ($n = 300$). The same pattern holds across all three randomly generated covariate shifts, showing the boosted ridge regression can achieve robustness toward covariate shifts and statistical efficiency at the same time. This reveals a new practical benefit of the multiaccurate estimator in this setting previously unexplored in the multicalibration literature: we achieve valid uncertainty quantification under distribution shift.

\subsection{Empirical application to retirement income}
\label{sec:empirical}

\begin{figure}[t]
    \centering
    \includegraphics[width=1.0\linewidth]{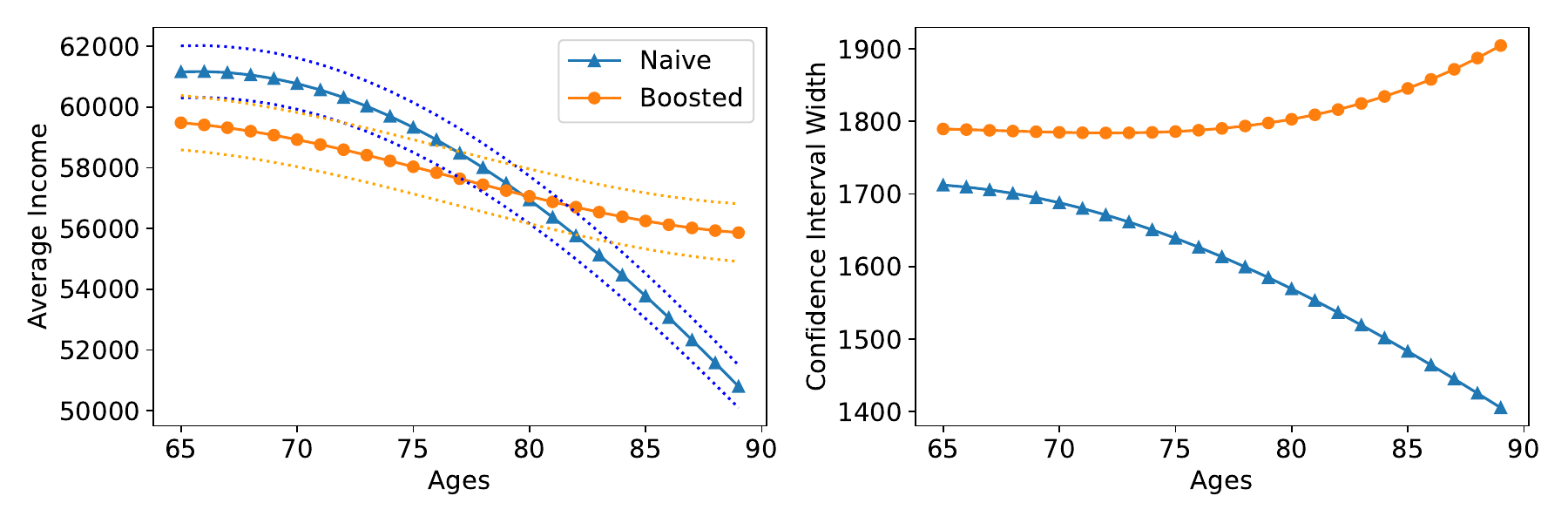}
    \caption{The left panel shows our point estimates for the age profile of income using a naive kernel ridge regression plug-in estimator and our ridge boosting estimator. The dotted lines are the corresponding 95\% confidence intervals and the width of the intervals are plotted in the right panel. }
    \label{fig:acs-retire}
\end{figure}

We now consider an empirical economics application: estimating the age profile of income throughout retirement. For an individual $i$, let $Y_i$ be total income (including retirement income), let $A_i$ be age, and let $X_i$ be other covariates like education, race, and marital status. Define $\gamma_0(A,X) \coloneqq \mathbb{E}[Y|A,X]$. Our estimand is the age profile of income (for ages $65, \ldots, 89$), which is defined as the following \emph{counterfactual} means:
\[ \theta_{65} \coloneqq \mathbb{E}[ \gamma_0( 65, X ) ], \quad \theta_{66} \coloneqq \mathbb{E}[ \gamma_0( 66, X ) ], \quad ... \quad, \quad \theta_{89} \coloneqq \mathbb{E}[ \gamma_0( 89, X ) ].  \]
That is, for each $\theta_a$, we want the average value of $\gamma_0(A,X)$, but where we have counterfactually replaced the age of every individual with the fixed value $a$. This is a key input into structural models of the macroeconomy --- see, for example, \cite{gourinchas2002consumption, kaplan2014model}. Because the distribution of the covariates $X$ varies with $A$, this can induce very significant distribution shift, especially late in retirement. We emphasize that this is a highly simplified example inspired by \citet{brunssmith2025agetime}, although applying our methodology to their setting is a promising direction for future work.

State of the art modeling here would construct separate semiparametric efficient estimators for each point in the age profile, necessarily requiring a separate debiasing term at every age. In this application, we instead use a single multiaccurate predictor to estimate the 25 different target estimands---one for each year of retirement age---
demonstrating the utility of our procedure in practice. 
We estimate the age profile of retirement using data from 2018  American Community Survey as processed by the \texttt{FolkTables} package \citep{ding2021retiring}. As in \Cref{sec:simulation} we fit a single kernel ridge and boosted kernel ridge model, and then compute plug-in estimates of $\theta_a$ using these two models. The results are displayed in \Cref{fig:acs-retire}. Whereas the naive estimate (without boosting) features a steep decline of \$11k from ages 65-89, our boosted estimates are substantially flatter --- better matching the theoretical model for pension and social security income from \cite{kaplan2014model}. Furthermore, while the naive confidence intervals shrink for the highest ages, our boosted confidence intervals actually grow slightly, suggesting that the naive model may undercover for the oldest part of the age profile.

\section{Conclusion}\label{sec:dis}

In this manuscript, we investigate the connection between multiaccuracy and semiparametric efficiency. Specifically, we show that boosting an initial predictor with kernel ridge produces an estimator that is not only multiaccurate over estimands in an RKHS norm ball but is also semiparametrically efficient for each target separately. This result can be understood through the lens of Riesz regression: ridge boosting implicitly performs Riesz regression, thereby yielding an augmented balancing weight estimator that attains the semiparametric efficiency bound. We demonstrate one practical benefit: valid confidence intervals across distributions under covariate-shifts not seen at training time. However, our results are limited to shifts described by an RKHS. And making this proposal fully practical requires additional investigation of appropriate hyperparameter tuning.
We hope this initial work leads to further exchange between the robustness and semiparametrics literatures.

\newpage
\bibliographystyle{abbrvnat}  
\bibliography{references}  

\newpage

\appendix
\newpage

\section{Proofs of Theoretical Results}

\subsection{Proof of \Cref{prop:multiacc-bias}}
\begin{proposition*}
    Let $\Theta$ be some set of functionals $\theta$ such that \Cref{def:target-estimand} and Assumption \ref{asm:continuity} hold. Let $\mathcal{A}$ be the corresponding set of Riesz representers:
    \[ \mathcal{A} \coloneqq \{ \alpha : \exists \theta \in \Theta \text{ s.t. } \theta(f) = \mathbb{E}[f(X) \alpha(X)], \forall f \text{ with } \mathbb{E}[f(X)^2] < \infty \}. \]
    Then if $\hat{\gamma}_\text{ma}$ is $(\mathcal{A}, a)$-multiaccurate:
    \[ | \theta(\hat{\gamma}_\text{ma}) - \theta(\gamma_0) | \leq a, \forall \theta \in \Theta. \]
\end{proposition*}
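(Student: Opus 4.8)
The whole argument is a short chain of equalities followed by one application of the multiaccuracy definition, so the plan is to set it up carefully and then invoke Definition \ref{def:multiacc}. First I would fix an arbitrary $\theta \in \Theta$ with Riesz representer $\alpha_\theta \in \mathcal{A}$, and use linearity of $\theta$ (part of Definition \ref{def:target-estimand}) to write $\theta(\hat{\gamma}_\text{ma}) - \theta(\gamma_0) = \theta(\hat{\gamma}_\text{ma} - \gamma_0)$. Then I would apply the Riesz representation from the definition of $\alpha_\theta$ to the function $f = \hat{\gamma}_\text{ma} - \gamma_0$, obtaining
\[
\theta(\hat{\gamma}_\text{ma} - \gamma_0) = \mathbb{E}_P\big[\alpha_\theta(X)\,(\hat{\gamma}_\text{ma}(X) - \gamma_0(X))\big].
\]

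The second step is to replace $\gamma_0(X)$ with $Y$ inside the expectation. Since $\gamma_0(x) = \mathbb{E}_P[Y \mid X = x]$, the tower property gives $\mathbb{E}_P[\alpha_\theta(X)\gamma_0(X)] = \mathbb{E}_P[\alpha_\theta(X)\,\mathbb{E}_P[Y\mid X]] = \mathbb{E}_P[\alpha_\theta(X)\,Y]$. Hence
\[
\theta(\hat{\gamma}_\text{ma}) - \theta(\gamma_0) = \mathbb{E}_P\big[\alpha_\theta(X)(\hat{\gamma}_\text{ma}(X) - Y)\big] = -\,\mathbb{E}_P\big[(Y - \hat{\gamma}_\text{ma}(X))\,\alpha_\theta(X)\big].
\]
Taking absolute values and using that $\alpha_\theta \in \mathcal{A}$, the $(\mathcal{A},a)$-multiaccuracy of $\hat{\gamma}_\text{ma}$ (Definition \ref{def:multiacc}) bounds the right-hand side by $a$. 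Since $\theta \in \Theta$ was arbitrary, this yields $|\theta(\hat{\gamma}_\text{ma}) - \theta(\gamma_0)| \leq a$ for all $\theta \in \Theta$.

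\textbf{Main obstacle.} The argument is essentially a one-liner, so the only real care needed is measure-theoretic bookkeeping: to invoke the Riesz identity one needs $\mathbb{E}_P[(\hat{\gamma}_\text{ma}(X) - \gamma_0(X))^2] < \infty$, i.e. both $\hat{\gamma}_\text{ma}$ and $\gamma_0$ should be square-integrable under $P$ (the latter follows from $\mathbb{E}_P[Y^2] < \infty$ and Jensen, and the former is a mild regularity condition on the fitted predictor); and to apply the tower property one needs $\mathbb{E}_P[\alpha_\theta(X)Y]$ to be well-defined, which follows from Cauchy--Schwarz together with $\mathbb{E}_P[\alpha_\theta(X)^2] < \infty$ (guaranteed by Assumption \ref{asm:continuity}) and $\mathbb{E}_P[Y^2] < \infty$. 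I would state these integrability facts explicitly at the start of the proof and then run the computation above. There is no deep step — the content is entirely in recognizing that multiaccuracy against the Riesz representer class $\mathcal{A}$ is exactly the statement that the plug-in bias vanishes up to $a$.
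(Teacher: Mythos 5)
Your proof is correct and follows essentially the same route as the paper's: write the bias via the Riesz representer, use the tower property to replace $\gamma_0(X)$ with $Y$, and invoke $(\mathcal{A},a)$-multiaccuracy. The only difference is that you make the integrability bookkeeping explicit, which the paper's proof leaves implicit.
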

\begin{proof}
For all $\theta \in \Theta$,
 \begin{align*}
     | \theta(\gamma_0) - \theta(\hat{\gamma}_\text{ma}) | &= | \mathbb{E}_P[ \alpha_\theta(X) \gamma_0(X)  ] - \mathbb{E}_P[ \alpha_\theta(X) \hat{\gamma}_\text{ma}(X) ] |\\
     &= | \mathbb{E}[ \alpha_\theta(X) ( \gamma_0(X) - \hat{\gamma}_\text{ma}(X)) ] | \\
     &= | \mathbb{E}[ \alpha_\theta(X) ( Y - \hat{\gamma}_\text{ma}(X)) ] | \\
     &\leq \sup_{\alpha \in \mathcal{A}} | \mathbb{E}[ \alpha(X) ( Y - \hat{\gamma}_\text{ma}(X)) ] | \leq a.
 \end{align*}    
\end{proof}

\subsection{Proof of \Cref{thm:ridge-boost-is-multiacc}}

\begin{assumption}\label{asm:thm1-extra-boundedness-asm}
We require the following boundedness conditions:
\begin{enumerate}
    \item Assume that $\mathbb{E}_P[ \Vert \phi(X) \Vert_\mathcal{H}^2 ] < B_1$.
    \item Assume that $Y$, $\hat{\gamma}_\text{init}(X)$, and $\phi(X)$ are bounded $P$-almost-surely.
\end{enumerate}
\end{assumption}

\begin{theorem*}
    For $\hat{\gamma}_\text{ma}$ defined in \eqref{eq:once-boosted-ridge}, we have:
    \[\widehat{\text{MAE}}_{\mathcal{A}}(\hat{\gamma}_\text{ma}) \leq  \max_{1\leq j\leq d} \frac{\lambda}{\lambda+\sigma_j^2}\widehat{\text{MAE}}_{\mathcal{A}}(\hat{\gamma}_\text{init}),\]
    where $\sigma^2_j$ are the eigenvalues of $\Phi_p^\top \Phi_p$. Under standard regularity conditions, with probability $1-\eta$,
    \[ \text{MAE}_{\mathcal{A}}(\hat{\gamma}_\text{ma}) \leq O\left( \delta_n + \sqrt{\frac{1/\eta}{n_p}}\right) ,\]
 for $\delta_n$ such that $\delta_n\rightarrow 0$ as $n \rightarrow \infty$.
\end{theorem*}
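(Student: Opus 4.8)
The plan is to treat the two displays separately: the first is an exact linear-algebra identity, and the second layers a standard uniform-concentration argument on top of it.

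\textbf{In-sample bound.} First I would use the reproducing property to turn $\widehat{\text{MAE}}_{\mathcal{A}}$ into a Euclidean norm. Writing $c(x)=\eta^\top\phi(x)$ with $\|c\|_{\mathcal H}^2=\|\eta\|_2^2$, the supremum over the unit ball $\mathcal{A}$ is attained by Cauchy--Schwarz, so $\widehat{\text{MAE}}_{\mathcal{A}}(\hat f)=\|\Phi_p^\top(Y_p-\hat f(X_p))\|_2$. Next I would substitute the closed form of the ridge step: with $r\coloneqq Y_p-\hat\gamma_\text{init}(X_p)$ one has $\hat\beta_\text{boost}=(\Phi_p^\top\Phi_p+\lambda I)^{-1}\Phi_p^\top r$, hence $\Phi_p^\top(Y_p-\hat\gamma_\text{ma}(X_p))=\big[I-\Phi_p^\top\Phi_p(\Phi_p^\top\Phi_p+\lambda I)^{-1}\big]\Phi_p^\top r=\lambda(\Phi_p^\top\Phi_p+\lambda I)^{-1}\Phi_p^\top r$. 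Diagonalizing $\Phi_p^\top\Phi_p=V\,\mathrm{diag}(\sigma_j^2)\,V^\top$, the operator $\lambda(\Phi_p^\top\Phi_p+\lambda I)^{-1}$ acts coordinatewise in the $V$-basis by multiplication by $\lambda/(\sigma_j^2+\lambda)$; comparing $\|\Phi_p^\top(Y_p-\hat\gamma_\text{ma}(X_p))\|_2$ to $\|\Phi_p^\top r\|_2=\widehat{\text{MAE}}_{\mathcal{A}}(\hat\gamma_\text{init})$ term by term and pulling out $\max_j\lambda/(\sigma_j^2+\lambda)$ gives the claimed contraction.

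\textbf{Out-of-sample bound.} I would pass to the population functional by a triangle split, $\text{MAE}_{\mathcal{A}}(\hat\gamma_\text{ma})\le \widehat{\text{MAE}}_{\mathcal{A}}(\hat\gamma_\text{ma})+\sup_{c\in\mathcal{A}}\big|(\mathbb{P}_{n_p}-P)[c(X)(Y-\hat\gamma_\text{ma}(X))]\big|$, using the appropriately normalized empirical functional. For the first term, the ridge first-order condition $\Phi_p^\top(Y_p-\hat\gamma_\text{ma}(X_p))=\lambda\hat\beta_\text{boost}$ gives $\widehat{\text{MAE}}_{\mathcal{A}}(\hat\gamma_\text{ma})=(\lambda/n_p)\|\hat\gamma_\text{boost}\|_{\mathcal H}$, and comparing the ridge objective at $\hat\beta_\text{boost}$ with its value at $0$ yields the deterministic bound $\|\hat\gamma_\text{boost}\|_{\mathcal H}\le\|r\|_2/\sqrt\lambda$; under Assumption~\ref{asm:thm1-extra-boundedness-asm} this makes the first term $O(\sqrt{\lambda/n_p})$, which is absorbed into $\delta_n$ whenever $\lambda=o(n_p)$. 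For the second term I would bound its expectation by the Rademacher complexity of the product class $\{(x,y)\mapsto c(x)(y-\gamma(x))\}$ with $c\in\mathcal{A}$ and $\gamma$ ranging over $\hat\gamma_\text{init}$ plus the RKHS ball of radius $\|r\|_2/\sqrt\lambda$ (which provably contains $\hat\gamma_\text{ma}$); boundedness of $Y$, $\hat\gamma_\text{init}$ and $\phi$ makes these products bounded, so by contraction the complexity reduces to that of $\mathcal{A}$, of order $\sqrt{\mathbb{E}_P\|\phi(X)\|_{\mathcal H}^2/n_p}$ — i.e. $\sqrt{d/n_p}$ in the finite-dimensional case, and more generally the critical radius set by the kernel spectrum — giving $\delta_n$. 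A bounded-differences (McDiarmid) or Talagrand argument then upgrades the in-expectation bound to the high-probability statement, producing the $\sqrt{(1/\eta)/n_p}$ term.

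\textbf{Main obstacle.} The algebra in the first part is routine; the difficulty is the generalization step. One must (i) choose a data-independent function class that provably contains the data-dependent predictor $\hat\gamma_\text{ma}$ with the correct radius, which is exactly what the deterministic estimate $\|\hat\gamma_\text{boost}\|_{\mathcal H}\le\|r\|_2/\sqrt\lambda$ buys, and (ii) track how the RKHS geometry enters $\delta_n$: in finite dimensions this is simply $\sqrt{d/n_p}$, but in general it is the localized critical radius governed by the eigenvalue decay of the kernel operator, and the rate of $\lambda$ must be simultaneously compatible with the bias contribution $\sqrt{\lambda/n_p}$ and with this complexity term. The remaining pieces — the reproducing-property rewriting, the SVD computation, and the concentration inequality — are standard.
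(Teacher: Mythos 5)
Your in-sample argument is essentially identical to the paper's: both use self-duality of the RKHS ball to rewrite $\widehat{\text{MAE}}_{\mathcal{A}}$ as $\Vert \Phi_p^\top(\cdot)\Vert_2$, substitute the ridge closed form, and read off the eigenvalues $\lambda/(\lambda+\sigma_j^2)$ of $I-\Phi_p^\top\Phi_p(\Phi_p^\top\Phi_p+\lambda I)^{-1}$ from the SVD. (Your unsquared norm is in fact the consistent choice given the stated contraction factor; the paper's write-up squares the norm in places.)

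For the generalization bound you take a genuinely different route. The paper never introduces an empirical process over the product class: it applies the same dual-norm identity to the \emph{population} MAE, $\text{MAE}_{\mathcal{A}}(\hat{\gamma}_\text{ma})=\Vert\mathbb{E}_P[\phi(X)(Y-\hat{\gamma}_\text{ma}(X))]\Vert_{\mathcal{H}}$, then uses Jensen and Cauchy--Schwarz with $\mathbb{E}_P\Vert\phi(X)\Vert_{\mathcal{H}}^2\le B_1$ to bound this by $\sqrt{B_1}$ times the root excess risk of the kernel ridge step, and finally cites off-the-shelf KRR generalization rates (critical-radius in finite dimensions, Fischer--Steinwart / Singh in the infinite-dimensional case) to produce $\delta_n$. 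This reduction is what lets the paper get away with ``standard regularity conditions'': all the hard work is outsourced to known KRR risk bounds. Your route --- triangle split into $\widehat{\text{MAE}}$ plus a uniform deviation term, then Rademacher complexity of $\{c(x)(y-\gamma(x))\}$ --- is workable in principle but has one step that does not go through as sketched: the data-independent class you need must contain $\hat{\gamma}_\text{boost}$, whose RKHS radius is only controlled by $\Vert r\Vert_2/\sqrt{\lambda}\asymp\sqrt{n_p/\lambda}$. A global Rademacher bound for the product class then picks up a term of order $\sqrt{B_1/\lambda}$ from the $\gamma$-direction, which does not vanish for the small $\lambda$ the theorem contemplates; you would need the localized (critical-radius) version of the argument, which you mention as an obstacle but do not carry out. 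So your proposal is a valid alternative strategy, but the paper's Cauchy--Schwarz reduction is both shorter and avoids exactly the localization difficulty your sketch runs into.
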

\begin{proof}
We begin by proving the first inequality --- that the sample multiaccuracy of $\hat{\gamma}_\text{ma}$ is always smaller than that of $\hat{\gamma}_\text{init}$. We have:
\begin{align*}
    \widehat{\text{MAE}}_{\mathcal{A}}(\hat{\gamma}_\text{ma})=\sup_{a\in \mathcal{A}}\left|\frac{1}{n}a(X_p)^T(Y_p-\hat{\gamma}_\text{init}(X_p)-\Phi_p\hat{\beta})\right|.
\end{align*}
Here $\mathcal{A}$ is a Hilbert ball with norm 1. Since Hilbert spaces are self-dual, we can use the definition of the dual norm:
\begin{align*}
    \widehat{\text{MAE}}_{\mathcal{A}}(\hat{\gamma}_\text{ma}) = \sup_{a\in \mathcal{A}}\left|\frac{1}{n}a(X_p)^T(Y_p-\hat{\gamma}_\text{init}(X_p)-\Phi_p\hat{\beta})\right| = \frac{1}{n}\| \Phi_p^T(Y_p-\hat{\gamma}_\text{init}(X_p)-\Phi_p\hat{\beta})\|_2^2.
\end{align*}
Similarly, the multi-accuracy error of $\hat{\gamma}_\text{init}$ is
\begin{align*}
    \widehat{\text{MAE}}_{\mathcal{A}}(\hat{\gamma}_\text{init})=\frac{1}{n}\|\Phi_p^T(Y_p-\hat{\gamma}_\text{init}(X_p))\|_2^2.
\end{align*}


Using the closed-form for ridge regression, we have:
\[
\hat{\beta}_\text{boost} = \left(\frac{1}{n}\Phi_p^T\Phi_p+\lambda I_d\right)^{-1}\frac{1}{n}\Phi_p^T(Y_p - \hat{\gamma}_\text{init}(X_p)).
\]
Therefore:
\begin{align*}
    \widehat{\text{MAE}}_{\mathcal{A}}(\hat{\gamma}_\text{ma})&=\frac{1}{n}\| \Phi_p^T(Y_p-\hat{\gamma}_\text{init}(X_p) ) - \Phi_p^T\Phi_p(\frac{1}{n}\Phi_p^T\Phi_p+\lambda I_d)^{-1}\frac{1}{n}\Phi_p^T(Y_p - \hat{\gamma}_\text{init}(X_p))\|_2^2\\&=\frac{1}{n}\|(I_n-\frac{1}{n}\Phi_p^T\Phi_p(\frac{1}{n}\Phi_p^T\Phi_p+\lambda I_d)^{-1})\cdot(Y_p - \hat{\gamma}_\text{init}(X_p))\|_2^2.
\end{align*}
Define
\[
P_\lambda=\frac{1}{n}\Phi_p^T\Phi_p\left(\frac{1}{n}\Phi_p^T\Phi_p+\lambda I_d\right)^{-1}.
\]
Consider the singular value decomposition (SVD) of $\frac{1}{\sqrt{n}}\Phi_p=U\Sigma_p^{1/2}V^T$, we have 
\[
P_{\lambda} = V \Sigma_p (\Sigma_p + \lambda I_d)^{-1} V^T
\]
The eigenvalues of $P_{\lambda}$ are thus
\[
\frac{\sigma_j^2}{\sigma_j^2+\lambda},\quad j=1,\ldots, d,
\]
where $\sigma^2_j$ are the eigenvalues of $\Phi_p^\top \Phi_p$.
The eigenvalues of $I_n-P_\lambda$ are
\[
\frac{\lambda}{\sigma_j^2+\lambda},\quad j=1,\ldots, d.
\]
Thus we have 
\begin{align*}
    \widehat{\text{MAE}}_{\mathcal{A}}(\hat{\gamma}_\text{ma})\leq \frac{1}{n}\max_{1\leq j\leq d} \frac{\lambda}{\lambda+\sigma_j^2}\| \Phi_p^T(Y_p - \hat{\gamma}(X_p))\|_2^2= \max_{1\leq j\leq d} \frac{\lambda}{\lambda+\sigma_j^2}\widehat{\text{MAE}}_{\mathcal{A}}(\hat{\gamma}_\text{init}).
\end{align*}

Now we turn to the second inequality, that the multiaccuracy of $\hat{\gamma}_\text{ma}$ generalizes. We apply the same dual norm argument to the population multiaccuracy error:
\begin{align*}
    \text{MAE}_{\mathcal{A}}(\hat{f}) &= \sup_{a\in \mathcal{A}}|\mathbb{E}_P[a(X)\cdot(Y-\hat{\gamma}_\text{ma}(X))]|\\
    &= \Vert \mathbb{E}_P[ \phi(X) \cdot (Y - \hat{\gamma}_\text{ma}(X)) ] \Vert_\mathcal{H} \\
    &\leq \sqrt{B_1 \mathbb{E}_P[ (Y - \hat{\gamma}_\text{ma}(X) )^2]}\\
    &= \sqrt{B_1} \mathbb{E}_P[ ( (Y - \hat{\gamma}_\text{init}(X)) - \hat{\gamma}_\text{boost}(X) )^2]
\end{align*}
where the expression in the second line is well-defined when Assumption \ref{asm:thm1-extra-boundedness-asm} holds, the third line uses Jensen's inequality for norms plus Cauchy-Schwarz and Assumption \ref{asm:thm1-extra-boundedness-asm}, and the last line uses the law of iterated expectations.

So ultimately, the expression depends on the mean squared error of kernel ridge regression for predicting $Y - \hat{\gamma}_\text{init}(X)$ using $X$. Here we can use any off-the-shelf bound to get our main result. 

For example, when $\phi(X) \in \mathbb{R}^d$, as we assume for simplicity in the main text, and if $\lambda$ is sufficiently small, we can apply the standard critical radius argument in \cite{wainwright2019high} to get that with probability at least $1- \eta$,
\[ \mathbb{E}_P[ ( (Y - \hat{\gamma}_\text{init}(X)) - \hat{\gamma}_\text{boost}(X) )^2] \leq O\left( \sqrt{\frac{d}{n}} + \sqrt{\frac{1/\eta}{n}} \right).\]

More generally, if $\mathcal{H}$ is an infinite dimensional Hilbert space, then given an effective dimension $b$ and certain standard smoothness assumptions, then by following \cite{fischer2020sobolev, singh2024debiased} we can show that taking $\lambda = (\log(n)/n)^{b/(b+1)}$, the squared generalization error of kernel ridge regression converges to $0$ at rate:
\[ O\left((\log(n)/n)^{b/(b+1)}\right). \]

\end{proof}

\subsection{Proof of \Cref{prop:ridge-ests-riesz} (Numerical equivalence of ridge regression and Riesz regression)}

\textbf{Ridge Regression:}
Given the auditing function class $\mathcal{H}$,  ridge regression solves
    \[ \min_{\beta \in \mathbb{R}^d} \Vert Z_p - \Phi_p \beta \Vert_2^2 + \lambda \Vert \beta \Vert_2^2. \]
with closed-form solution
\[
\hat{\beta}_\lambda=(\frac{1}{n_p} \Phi_p ^T\Phi_p +\lambda I_d)^{-1} \cdot \frac{1}{n_p}\Phi_p ^TZ_p.
\]
The ridge boosting estimator is then given by
\[
\theta(\Phi_p)^T\hat{\beta}_\lambda=\theta(\Phi_p)^T(\frac{1}{n_p}\Phi_p^T\Phi_p+\lambda I_d)^{-1} \cdot \frac{1}{n_p}\Phi_p^T Z_p.
\]

\textbf{Riesz regression:}
The Riesz regression solves

\[
\hat{\eta}_\lambda=\min_{f \in \mathcal{F}} \left\{ \frac{1}{n_p}\eta^{T}\Phi_p^T \Phi_p\eta - 2 \theta(\Phi_p)^\top \eta + \lambda \|\eta\|_{2} \right\}.
\]
The resulting Riesz representer estimator is
\[
\hat{\alpha}^\lambda_\theta(\Phi_p)=\Phi_p\hat{\eta}_\lambda=\Phi_p(\frac{1}{n_p}\Phi_p^T\Phi_p+\lambda I)^{-1} \theta(\Phi_p).
\]
The corresponding bias correction term is thus given by
\begin{align*}
\frac{1}{n_p}\hat{\alpha}^\lambda_\theta(\Phi_p)^T\cdot Z_p=\theta(\Phi_p)^T(\frac{1}{n_p}\Phi_p^T\Phi_p+\lambda I)^{-1}\cdot  \frac{1}{n_p}\Phi_p^T Z_p.
\end{align*}

This shows the numerical equivalence:
 \[
\frac{1}{n_p} \hat{\alpha}^\lambda_\theta(X_p)^T\cdot Z_p=\theta(\Phi_p)^T\hat{\beta}_\lambda.
\]
Therefore, ridge regression estimator is numerically equivalent to the Riesz regression-based augmented estimator.

\subsection{Proof of \Cref{thm:ridge-boosting-is-efficient}}
 
\subsubsection{Notation}

We first introduce notation that will be used in the proofs.

\begin{itemize}
    \item  We denote the conditional excess risks of the nuisance estimators $\hat{\gamma}$ and $\hat{\alpha}$ given the observed covariates as follows:
\[
R(\hat{\gamma}) := \mathbb{E} \left[ \left\{ \hat{\gamma}(X) - \gamma_0(X) \right\}^2 \,\big|\, X_p \right],\quad
R(\hat{\alpha}) := \mathbb{E} \left[ \left\{ \widehat{\alpha}(X) -  \alpha_\theta(X) \right\}^2 \,\big|\, X_p \right].
\]
\item We denote the efficient influence function for $\theta(\gamma_0)$ as $\psi_0(X)$
and its high-order moments as
\[
\sigma_0^2=\mathbb{E}_Q[\psi_0(X)^2], \quad \kappa^3=\mathbb{E}_Q[\psi_0(X)^3], \quad\zeta^4=\mathbb{E}_Q[\psi_0(X)^4].
\]
\end{itemize}

\subsubsection{Additional Assumptions}



\begin{assumption}
   $ \mathbb{E}_P\left[\gamma_0^2(X)\right] <\infty$ and $\sigma_0^2(X)=\mathbb{E}_P[(Y-\gamma_0(X))^2|X]$ is bounded.
\end{assumption}

\begin{assumption}\label{assum:consistency}
   The initial estimator $\hat{\gamma}$ for $\gamma_0$ is consistent in the sense that $R(\hat{\gamma})\rightarrow_p 0$.
\end{assumption}

\begin{assumption}\label{assum:moments}
The moments of the efficient influence function $\psi_0(X)$ satisfy \[
  \left( \left(\frac{\kappa}{\sigma}\right)^{3}+\zeta^2\right) n^{-1/2} \to 0.
\]
\end{assumption}

\subsubsection{Revisiting the theorems}

\begin{theorem}\label{thm2}
Assume that we have a source population $P$ and Assumption \ref{asm:continuity}-\ref{assum:moments} hold. Consider a RKHS $\mathcal{H}$ satisfying $\alpha_\theta \in \mathcal{H}$ and $\sqrt{n}R^{1/2}(\hat{\gamma})\rightarrow_p 0$.
 
Then the ridge boosting estimator with $\lambda=n^{-1/2}$ is asymptotically normal and its variance achieves the asymptotic variance lower bound $V_\theta^*$:
\[
\frac{\sqrt{n}(\hat{\theta}(\hat{\gamma}_\text{ma})-\theta(\gamma_0))}{\hat{V}(\lambda)}\rightarrow \mathcal{N}(0,1),  \text{ and } \hat{V}(\lambda)\rightarrow_p V_\theta^*.
\]
where 
\[
\hat{V}(\lambda)=\frac{1}{n_Q}\sum_{i=1}^{n_Q} (\hat{\theta}( \hat{\gamma}_\text{init} )+\Phi_{i}^T (\Phi_p^T\Phi_p+\lambda I)^{-1}\Phi_p^T(Y_p - \hat{\gamma}_\text{init}(X_p))-\hat{\theta}(\hat{\gamma}_\text{ma}))^2.
\]

\end{theorem}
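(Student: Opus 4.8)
The plan is to reduce the statement to the now-standard asymptotic theory for ``automatic debiased machine learning'' / augmented balancing-weight estimators. By \Cref{prop:ridge-ests-riesz} and \eqref{eq:double-robust-form}, the ridge-boosting plug-in $\hat{\theta}(\hat{\gamma}_\text{ma})$ is \emph{numerically} the estimator
\[ \hat{\theta}(\hat{\gamma}_\text{init}) + \frac{1}{n}\,\hat{\alpha}^\lambda_\theta(X_p)^\top\big(Y_p - \hat{\gamma}_\text{init}(X_p)\big), \]
where $\hat{\alpha}^\lambda_\theta$ solves the regularized Riesz regression \eqref{eq:ridge-riesz-rep} in $\mathcal{H}$ with $\lambda = n^{-1/2}$. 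So it suffices to check, for each fixed $\theta\in\Theta$, the usual sufficient conditions for $\sqrt{n}$-asymptotic linearity of this estimator with influence function equal to the efficient influence function $\psi_0$ (see, e.g., \citet{chernozhukov2021automatic, chernozhukov2023simple, singh2024debiased, bruns2025augmented}). I would organize the argument as: (i) control the estimation error of $\hat{\alpha}^\lambda_\theta$; (ii) show the Neyman-orthogonality ``product'' bias is $o_p(n^{-1/2})$; (iii) show the empirical-process remainder is $o_p(n^{-1/2})$; (iv) apply a Lyapunov CLT to the oracle average $n^{-1/2}\sum_i \psi_0(X_i,Y_i)$; and (v) prove $\hat{V}(\lambda)\to_p V^*_\theta$.

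For step (i), I would observe that minimizing \eqref{eq:ridge-riesz-rep} is kernel ridge regression against the Riesz loss, and that the hypothesis $\alpha_\theta\in\mathcal{H}$ puts us in the well-specified regime. Then, running the same effective-dimension / critical-radius machinery used in the second half of the proof of \Cref{thm:ridge-boost-is-multiacc} (following \citet{fischer2020sobolev, singh2024debiased}) with $\lambda = n^{-1/2}$ yields $R(\hat{\alpha}^\lambda_\theta)\to_p 0$ at an explicit rate; in fact, even the crude bound $R^{1/2}(\hat{\alpha}^\lambda_\theta) = O_p(1)$, which follows from \Cref{asm:thm1-extra-boundedness-asm} and the RKHS-norm constraint, already suffices for step (ii). For step (ii), the Riesz identity $\theta(\gamma_0) = \mathbb{E}_P[\alpha_\theta(X)\gamma_0(X)]$ together with Cauchy--Schwarz bounds the first-order bias of the expansion by $R^{1/2}(\hat{\gamma}_\text{init})\,R^{1/2}(\hat{\alpha}^\lambda_\theta)$; multiplying by $\sqrt{n}$ and invoking the hypothesis $\sqrt{n}R^{1/2}(\hat{\gamma})\to_p 0$ (together with $R^{1/2}(\hat{\alpha}^\lambda_\theta)=O_p(1)$) makes this $o_p(1)$. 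This is also where linearity of $m$ in $f$ from \Cref{def:target-estimand} is used: it guarantees there is no additional second-order term from linearizing $m$.

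For step (iii), the mean-zero term $n^{-1}\sum_i (\hat{\alpha}^\lambda_\theta - \alpha_\theta)(X_i)(Y_i-\gamma_0(X_i))$ has conditional variance of order $R(\hat{\alpha}^\lambda_\theta)/n = o_p(1/n)$ by \Cref{asm:thm1-extra-boundedness-asm} and step (i); the companion term involving $\hat{\gamma}_\text{init} - \gamma_0$ is handled either by cross-fitting or by a low-complexity/stability condition on $\hat{\gamma}_\text{init}$ (the ``conditions on the quality of $\hat{\gamma}_\text{init}$'' in the informal statement), while the $\hat{\gamma}_\text{boost}$ piece lies in a fixed RKHS ball and is controlled by its critical radius exactly as in \Cref{thm:ridge-boost-is-multiacc}. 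Combining (ii)--(iii) gives $\sqrt{n}\big(\hat{\theta}(\hat{\gamma}_\text{ma}) - \theta(\gamma_0)\big) = n^{-1/2}\sum_i \psi_0(X_i,Y_i) + o_p(1)$. For step (iv), \Cref{assum:moments} supplies the Lyapunov ratio $\mathbb{E}|\psi_0|^3/(\sigma_0^3\sqrt{n})\to 0$, giving $\sqrt{n}(\hat{\theta}(\hat{\gamma}_\text{ma})-\theta(\gamma_0))/\sigma_0 \to \mathcal{N}(0,1)$ with $\sigma_0^2 = V^*_\theta$ by definition of the efficient influence function. For step (v), $\hat{V}(\lambda)$ is the empirical second moment of the estimated influence-function contributions; consistency of $\hat{\gamma}_\text{init}$, consistency of $\hat{\alpha}^\lambda_\theta$ from step (i), the fourth-moment control in \Cref{assum:moments}, and the boundedness assumptions give $\hat{V}(\lambda)\to_p \mathbb{E}_P[\psi_0^2] = V^*_\theta$ via a uniform law of large numbers, and Slutsky assembles the normalized statement.

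I expect the main obstacle to be steps (i)--(ii) in the infinite-dimensional case: carefully pinning down the regularization bias and variance of $\hat{\alpha}^\lambda_\theta$ for the specific choice $\lambda = n^{-1/2}$ requires invoking the source/smoothness and effective-dimension conditions, and then verifying that the resulting rate for $R^{1/2}(\hat{\alpha}^\lambda_\theta)$ is compatible with the (deliberately strong) hypothesis $\sqrt{n}R^{1/2}(\hat{\gamma})\to_p 0$ so that their product is genuinely $o_p(n^{-1/2})$. A secondary subtlety is controlling the empirical-process remainder for an arbitrary black-box $\hat{\gamma}_\text{init}$, which is precisely why the formal theorem imposes an explicit rate condition on $\hat{\gamma}$ rather than merely $R(\hat{\gamma})\to_p 0$.
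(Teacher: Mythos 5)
Your proposal is correct and takes essentially the same route as the paper's proof: both reduce $\hat{\theta}(\hat{\gamma}_\text{ma})$ via \Cref{prop:ridge-ests-riesz} to the Riesz-regression augmented (ADML) estimator and then verify the standard excess-risk and moment conditions, using the kernel Riesz regression rate of \citet{singh2024debiased} with $\lambda = n^{-1/2}$ together with the hypothesis $\sqrt{n}R^{1/2}(\hat{\gamma})\to_p 0$ to control the product bias, and the law of large numbers for $\hat{V}(\lambda)$. The only difference is presentational: the paper packages your steps (ii)--(iv) into a single invocation of Corollary 5.1 of \citet{chernozhukov2023simple} (Lemma~\ref{lem1}), whereas you unpack that lemma into the orthogonality, empirical-process, and Lyapunov CLT components explicitly.
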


\subsubsection{Proof of \Cref{thm2}}
In \Cref{prop:ridge-ests-riesz}, we established the numerical equivalence between ridge regression and Riesz regression. For any $\theta \in \Theta$, we have the following:
\begin{align*}
    \hat{\theta}(\hat{\gamma}_\text{ma}) &= \hat{\theta}( \hat{\gamma}_\text{init} ) + \hat{\theta}(\hat{\gamma}_\text{boost})\nonumber\\
    &=\hat{\theta}( \hat{\gamma}_\text{init} ) +\bar{\Phi}_q^T\hat{\beta}_\lambda
    \\&= \hat{\theta}( \hat{\gamma}_\text{init} ) + \frac{1}{n_p}\hat{\alpha}^\lambda_\theta(X_p)^\top (Y_p - \hat{\gamma}_\text{init}(X_p)),
\end{align*}
where the last equality follows from applying \Cref{prop:ridge-ests-riesz} for $Z_p = Y_p - \hat{\gamma}_\text{init}(X_p)$.

With this result, we then can derive the asymptotic normality using the Riesz regression formulation.

The asymptotic normality of the Riesz regression-based augmented estimator follows from Corollary 5.1 in \citep{chernozhukov2023simple} (Lemma \ref{lem1} in the Appendix).

The constants $\bar{\sigma}$ , $\bar{\alpha}$ and $\bar{\alpha}'$ are assumed to be bounded and are independent of sample size $n$. With a sufficiently large sample size, the last three conditions in Theorem \ref{thm2} reduce to 
\[
\left\{ R(\hat{\gamma}) \right\}^{1/2} = o_p(1), \quad
\left\{ R(\hat{\alpha}_\lambda) \right\}^{1/2} = o_p(1),\quad 
    \left\{ n R(\hat{\gamma}) R(\hat{\alpha}_\lambda) \right\}^{1/2}  = o_p(1).
\]
 $\left\{ R(\hat{\gamma}) \right\}^{1/2} = o_p(1)$ is guaranteed from the conditions in Theorem \ref{thm2}.
 
Now it remains to study $ \mathbb{E} \left[ \left\{ \widehat{\alpha}_\lambda(X) - \alpha_\theta(X) \right\}^2 \,\big|\, X_p \right]$. Theorem 2 in \citep{singh2024debiased} implies that when $\lambda=n^{-\frac{1}{2}}$,  $$
\lim_{\tau \to \infty} \limsup_{n \to \infty} \mathbb{P}_{X_p\sim P} \left\{ \mathbb{E} \left[ \left\{ \widehat{\alpha}_\lambda(X) - \alpha_\theta(X) \right\}^2 \,\big|\, X_p \right] > \tau \cdot n^{-1} \right\} = 0.
$$
Now we can show 
 \[
 \left\{ R(\hat{\alpha}_\lambda) \right\}^{1/2} = o_p(1) \text{ and } \left\{ n R(\hat{\gamma}) R(\hat{\alpha}_\lambda) \right\}^{1/2}  = o_p(1).
 \]
Therefore, we can directly show the asymptotic normality of the ridge boosting estimator.
\[
\frac{\sqrt{n}(\hat{\theta}(\hat{\gamma}_\text{ma})-\theta)}{\hat{V}(\lambda)}\rightarrow \mathcal{N}(0,1).
\]
Finally, the consistency of the variance estimator follows from the law of large numbers.

\subsubsection{Additional Lemma}
\begin{lemma}[Corollary 5.1 in \citep{chernozhukov2023simple}]\label{lem1}
    Suppose the following regularity and learning rate conditions hold as $n \to \infty$:
\begin{itemize}
    \item The conditional variance is uniformly bounded: \(
   \mathbb{E}[(Y - \gamma_0(W))^2 \mid W] \leq \bar{\sigma}^2 \),
   \item The nuisance parameter and estimator are uniformly bounded:\(
   \|\alpha_\theta\|_{\infty} \leq \bar{\alpha}, \quad \|\hat{\alpha}\|_{\infty} \leq \bar{\alpha}'\),
   \item The moments of the efficient influence function are controlled:
\(
  \left( \left(\frac{\kappa}{\sigma_0}\right)^{3}+\zeta^2\right) n^{-1/2} \to 0,
\)
\item The excess risks decay appropriately:
\[
(\frac{\bar{\alpha}}{\sigma_0} + \bar{\alpha}' )\cdot \left\{ R(\hat{\gamma}) \right\}^{1/2} = o_p(1), \quad
   \bar{\sigma} \cdot \left\{ R(\hat{\alpha}) \right\}^{1/2} = o_p(1),\quad 
   \frac{ \left\{ n R(\hat{\gamma}) R(\hat{\alpha}) \right\}^{1/2}  }{ \sigma_0 } = o_p(1).
\]
\end{itemize}

Then the estimator $\hat{\theta}$ is consistent and asymptotically normal. 
$$
\hat{\theta} = \theta_0 + o_p(1), \quad \sigma_0^{-1} n^{1/2}(\hat{\theta} - \theta_0) \rightsquigarrow \mathcal{N}(0,1).
$$
\end{lemma}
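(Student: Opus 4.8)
The plan is to recognize $\hat{\theta}$ as the sample average of an estimated Neyman-orthogonal moment and to reproduce the standard debiased machine learning argument of \citet{chernozhukov2023simple}, of which this lemma is a direct restatement (their Corollary 5.1). Writing the moment function $\psi(X,Y;\theta,\gamma,\alpha) = m(\gamma,X) + \alpha(X)\bigl(Y-\gamma(X)\bigr) - \theta$ and the efficient influence function $\psi_0(X,Y) = \psi(X,Y;\theta_0,\gamma_0,\alpha_\theta)$, the estimator satisfies $\hat{\theta}-\theta_0 = \tfrac{1}{n}\sum_i \psi(X_i,Y_i;\theta_0,\hat{\gamma},\hat{\alpha})$. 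I would decompose $\sqrt{n}\,\sigma_0^{-1}(\hat{\theta}-\theta_0)$ into three pieces: a leading oracle term $\sigma_0^{-1}n^{-1/2}\sum_i \psi_0(X_i,Y_i)$, a deterministic bias term conditional on the nuisance-training sample $X_p$, and a centered empirical-process remainder, and then dispatch each in turn so that only the oracle term survives.

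The bias term is where Neyman orthogonality does the work. Taking the conditional expectation of the estimated moment given $X_p$ and using $\mathbb{E}[Y\mid X]=\gamma_0(X)$ together with the two representations $\mathbb{E}_P[m(\gamma,X)]=\theta(\gamma)=\mathbb{E}_P[\alpha_\theta(X)\gamma(X)]$, the first-order contributions of $\hat{\gamma}$ and $\hat{\alpha}$ cancel exactly, leaving the pure second-order product $\mathbb{E}\bigl[(\alpha_\theta(X)-\hat{\alpha}(X))(\hat{\gamma}(X)-\gamma_0(X))\mid X_p\bigr]$. By Cauchy--Schwarz this is bounded by $\{R(\hat{\gamma})R(\hat{\alpha})\}^{1/2}$, so after multiplying by $\sqrt{n}$ it is $o_p(1)$ precisely under the product-rate hypothesis $\{nR(\hat{\gamma})R(\hat{\alpha})\}^{1/2}=o_p(1)$.

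For the centered empirical-process remainder $n^{-1/2}\sum_i\bigl\{\psi(X_i,Y_i;\theta_0,\hat{\gamma},\hat{\alpha})-\psi_0(X_i,Y_i)\bigr\}$ minus its conditional mean, I would invoke cross-fitting so that $\hat{\gamma},\hat{\alpha}$ are independent of the evaluation fold; conditionally on $X_p$ the summands are then i.i.d., mean zero, with conditional variance bounded by a constant multiple of $\bar{\sigma}^2 R(\hat{\alpha}) + (\bar{\alpha}+\bar{\alpha}')^2 R(\hat{\gamma})$ via the boundedness assumptions. A conditional Chebyshev argument then gives $o_p(1)$ under the two single-rate conditions $\bar{\sigma}\{R(\hat{\alpha})\}^{1/2}=o_p(1)$ and $(\bar{\alpha}/\sigma_0+\bar{\alpha}')\{R(\hat{\gamma})\}^{1/2}=o_p(1)$. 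The surviving oracle term converges to $\mathcal{N}(0,1)$ by the Lyapunov CLT, whose order-three ratio is exactly $(\kappa/\sigma_0)^3 n^{-1/2}\to 0$; the fourth-moment quantity $\zeta$ controls the concentration of the plug-in variance estimator, giving $\hat{V}\to_p\sigma_0^2=V^*_\theta$, and Slutsky's theorem delivers the studentized conclusion $\sigma_0^{-1}n^{1/2}(\hat{\theta}-\theta_0)\rightsquigarrow\mathcal{N}(0,1)$.

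The main obstacle is the empirical-process step. Without sample splitting, controlling $n^{-1/2}\sum_i\{\psi(X_i,Y_i;\hat{\gamma},\hat{\alpha})-\psi_0(X_i,Y_i)\}$ would require Donsker or entropy conditions on the nuisance classes, which typically fail for flexible machine-learning fits; cross-fitting circumvents this but forces one to verify that the conditional-on-$X_p$ framing is consistent with how $R(\hat{\gamma})$ and $R(\hat{\alpha})$ are defined and that fold-wise averages recombine without inflating the asymptotic variance. Every remaining step is the routine orthogonal-moment calculation, so the substance of the proof reduces to importing the cited result and checking that each of its hypotheses maps onto one of the stated conditions.
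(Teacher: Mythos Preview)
Your sketch is essentially correct and in fact goes beyond what the paper does: the paper does not prove this lemma at all, but simply imports it verbatim as Corollary~5.1 of \citet{chernozhukov2023simple} and applies it as a black box in the proof of Theorem~\ref{thm2}. The decomposition you outline---oracle term handled by Lyapunov, second-order bias controlled by the product rate via orthogonality, and empirical-process remainder dispatched by cross-fitting plus conditional Chebyshev---is precisely the standard argument underlying that cited result, so there is nothing to compare on the paper's side; your final sentence (``the substance of the proof reduces to importing the cited result'') is exactly the paper's position.
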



\section{Random Forest Boosting}\label{apx:random-forest-boosting}

Our main ``Universal Efficiency'' argument for Ridge Boosting hinges on the fact that:
For any $\theta \in \Theta$, we have the following:
\begin{align}
    \hat{\theta}(\hat{\gamma}_\text{ma}) &= \hat{\theta}( \hat{\gamma}_\text{init} ) + \hat{\theta}(\hat{\gamma}_\text{boost})\nonumber\\
    &= \hat{\theta}( \hat{\gamma}_\text{init} ) + \frac{1}{n_p}\hat{\alpha}^\lambda_\theta(X_p)^\top (Y_p - \hat{\gamma}_\text{init}(X_p)),
\end{align}
which has a form of a debiased estimator. Following the semiparametric efficiency theory results from \cite{chernozhukov2023simple}, as long as $\hat{\gamma}_\text{init}$ converges to $\gamma_0$ as $n \rightarrow \infty$, and $\hat{\alpha}^\lambda_\theta(X_p) \rightarrow \alpha_0$, and their product rate is sufficiently fast, the resulting estimator is efficient. Furthermore, 

The implied weights that come from ridge boosting, $\hat{\alpha}^\lambda_\theta(X_p)$ are equivalent to a ridge minimizer of the Riesz loss. That allows us to use the fast rates proven in \cite{singh2024debiased} for kernel Riesz representer estimates.

What about other linear smoothers that have the form: $\hat{\theta}( \hat{\gamma}_\text{linsmooth}) = \frac{1}{n_p} w(X_p)^\top Z_p$? Ridge regression is one example with the important property that the smoother weights implicitly estimate the Riesz representer, but are there other examples? 

Remarkably, \cite{lin2022regression} show that Random Forests (which can be written as a linear smoother) also have implied weights that estimate the Riesz representer. Furthermore, the rate is fast enough to secure semiparametric efficiency. Therefore, boosting with Random Forests will have exactly the same efficiency guarantees that we show for ridge boosting. This is important because Random Forests are much more flexible than ridge regression in a fixed basis --- indeed, one can think of Random Forests as kernel ridge regression but that learns an adaptive kernel. That would mean that the robustness/universal efficiency properties would hold to a wider class of estimands than just those expressible by a single RKHS.

However, the proof in \cite{lin2022regression} is specific to ATE / covariate shift type estimands. It would be interesting to see if this could be generalized in future work to apply to generic Riesz representers.

\section{Connection between multiaccuracy and TMLE: A unified view from boosting}

\subsection{Boosting to get a multiaccurate estimator}

The multi-accurate estimator can be derived by a boosting strategy, where you first have an initial estimator $\hat{\gamma}$ of the outcome regression function, and then solve the following least square regression problem:
$$
\min_{\{\lambda_\phi\}_{\phi \in \mathcal{H}}}\mathbb{E}_P[(Y-\hat{\gamma}(X)-\sum_{\phi\in \mathcal{H}}\lambda_\phi \cdot \phi(X))^2].
$$
Let $\hat{\{\lambda_\phi}\}_{\phi \in \mathcal{H}}$ be the optimal solution, the first order optimality condition for this is exactly the definition of multiaccuracy with $\alpha=0$:
$$
\mathbb{E}_P[h(X)\cdot(Y-\hat{\gamma}(X)-\sum_{\phi\in \mathcal{H}}\hat{\lambda}_\phi \cdot \phi(X))]=0.
$$
In this way, the final estimator $\hat{\gamma}(X)+\sum_{\phi\in \mathcal{H}}\hat{\lambda}_\phi \cdot \phi(X)$ is multiaccurate.

In the context of ridge boosting, you could treat the above $\mathcal{H}$ as the basis of the Hilbert space.

\subsection{TMLE}

\subsubsection{TMLE can be viewed as boosting in the direction of the estimated clever covariate}

The boosting step to get a multi-accurate estimator share a similar principle as the targeting step of the TMLE. In the targeting step of TMLE,  we first have an initial estimator $\hat{\gamma}(X)$. Then, we specify a parametric working submodel:
$$
\hat{\mu}_\epsilon(X)=\hat{\mu}(X)+\epsilon \cdot \hat{\phi}(X),\quad \epsilon \in (-\delta,\delta).
$$
Here $\hat{\phi}(X)$ is the estimated clever covariate, which is usually given in the explicitly derived efficient influence function. Then the targeting step in TMLE solves the following boosting-type regression:

$$
\hat{\epsilon}=\arg\min_{\epsilon}\mathbb{E}_P[(Y-\hat{\mu}_\epsilon(X))^2]=\arg\min_{\epsilon}\mathbb{E}_P[(Y-\hat{\mu}(X)-\epsilon \cdot \hat{\phi}(X))^2].
$$

In the case of estimating ATE, solving the optimization once (one-step TMLE) will give you the semiparametrically efficient estimator under the standard assumptions, because it automatically solves the score equation 
$$
\mathbb{E}_P[\hat{\phi}(X)\cdot(Y-\hat{\mu}(X)-\hat{\epsilon} \cdot \hat{\phi}(X))]=0.
$$
and the update of clever covariate $\hat{\phi}$ is not needed.

\subsection{Ridge boosting multiaccurate estimator can be seen as a TMLE for multidimensional parameters with known clever covariate (Riesz representer)}

\subsubsection{Ridge boosting multiaccurate estimator}
Consider auditing with the function class,

\[
\mathcal{C} = \left\{ c(x) = \beta^T \phi(x) : \|\beta\|_{\mathcal{H}} \leq B \right\}= \left\{ c(x) = \sum_{k=1}^d \beta_k \cdot \phi_k(x) : \|\beta\|_{\mathcal{H}} \leq B \right\}, 
\]

We could implement the following boosting step: 
\[
\hat{\beta} := \arg\min_{\beta \in \mathbb{R}^d} \left\{ \mathbb{E}_P\big[\| y - \hat{\gamma}(X) - \phi(X)^T \beta \|_2^2\big]  + \lambda \|\beta\|_2^2\right\}
\]
to obtain a multiaccurate estimator $ \hat{\gamma}(X) + \phi(X)^T \hat{\beta}$.

\subsubsection{Viewing this as TMLE for multidimensional parameters}

From the TMLE perspective, the auditing function class $\mathcal{C} $ here can be seen as a multidimensional (possibly infinite) parametric submodel in the targeting step. However, instead of targeting a single target parameter, we simultaneously target multiple parameters
$$[\mathbb{E}_{Q_1}(Y),\ldots,\mathbb{E}_{Q_d}(Y)]^{T}.$$
with $dQ_k/dP=\phi_k(X)$, where $\phi_k(X)$ is the $k$th dimension basis of $\mathcal{C}$. Here $dQ_k/dP$ can be seen as the clever covariate for the target parameter $\mathbb{E}_{Q_k}(Y)$.

With this multidimensional parametric submodel, unregularized TMLE simulateneously solves multiple score equations for these different parameters. 
$$
\mathbb{E}_P[\phi_k(X)\cdot( y - \hat{\gamma}(X) - \phi(X)^T \hat{\beta} )]=0, \quad \forall k=1,\ldots,d.
$$
By doing this, we get a simultaneously semiparametrically efficient estimator for $$[E_{Q_1}(Y),\ldots,E_{Q_d}(Y)]^{T}.$$
Moreover, if $dQ/dP$ ($Q$ is the target population) can be approximated by the linear combination of $[dQ_1/dP,\ldots,dQ_d/dP]^T$. In other words,
\[
\min_{c \in \mathcal{C}} \|\frac{dQ}{dP}-c\|_2\rightarrow 0.
\]
Then we could still obtain a 
semiparametrically efficient estimator for $\mathbb{E}_Q[Y]$ by applying $ \hat{\gamma}(X) + \phi(X)^T \hat{\beta}$ on $X_q$ if the initial estimator $\hat{\gamma}$ is good enough in the sense that:
$$ \|\hat{\gamma}-\gamma_0\|_2 \rightarrow 0, \text{ and } \sqrt{n} \|\hat{\gamma}-\gamma_0\|_2\cdot \min_{c\in\mathcal{C}}\|\frac{dQ}{dP}-c\|_2\rightarrow_p 0.$$

\section{Score-preserving TMLE}

There is a recent paper about Score-Preserving TMLE \citep{pimentel2025score}. The general idea is that instead of using the estimated clever covariate only, we add the basis functions in the nuisance parameter estimation into the targeting step. This paper mentioned that solving additional scores reduces the remainder term in
the von-Mises expansion of our estimator because these scores may come close to spanning higher-order influence functions and result in an estimator with better finite-sample performance.

In the case of estimating $\mathbb{E}[Y(1)]$, the authors use the following parametric submodel in the targeting step:
\[
\text{logit} \ \hat{Q}_n^{(\boldsymbol{\epsilon})}(W) = \text{logit} \ \hat{Q}_n(W) + \epsilon_0 \cdot \frac{A}{\hat{g}_n(W)} + \sum_{j} \epsilon_j \, \Phi_{n,j}(W),
\]
where $\hat{Q}_n(W)$ is the initial estimator,  $\frac{A}{\hat{g}_n(W)}$ is the estimated clever covariate and $\Phi_{n,j}(W)$ is the basis function in the nuisance parameter estimation. There is a deep connection between Score-preserving TMLE and ridge boosting: If we use ridge outcome regression and ridge boosting with function class $\mathcal{C}$, it seems that we are indeed performing a version of score-preserving TMLE because we include the basis functions of outcome regression in this submodel, and the clever covariate can also be represented using these basis functions. Right now, the Score-preserving TMLE is studied in the setting of highly adaptive lasso. But there is a recent paper on highly adaptive ridge \citep{schuler2024highly}, where ridge regression might be a good fit for score-preserving TMLE.


\newpage
\end{document}